\newtheorem{remark}[theorem]{Remark}
\newtheorem{assumption}{Assumption}
\newtheorem{result}{Theorem}
\def\fin{\ifmmode{\Large$\diamond$}\else{\unskip\nobreak\hfil
    \penalty50\hskip1em\null\nobreak\hfil{\Large$\diamond$}
    \parfillskip=0pt\finalhyphendemerits=0\endgraf}\fi}
\def\be#1#2\ee{\begin{equation}\label{eq:#1}#2\end{equation}}
\def\req#1{{\rm(\ref{eq:#1})}}
\def\bdm  {\begin{displaymath}}
  \def\edm  {\end{displaymath}}
\def\bdmal{\begin{displaymath}\begin{aligned}}
    \def\edmal{\end{aligned}\end{displaymath}}
\mathchardef\PhiG="0108
\renewcommand{\L}{{\mathscr L}}
\newcommand{\N}{{\mathord{\mathbb N}}}
\newcommand{\R}{{\mathord{\mathbb R}}}
\renewcommand{\L}{{\mathscr L}}
\newcommand{\norm}[1]{\|#1\|}
\newcommand{\rmd}{\,\mathrm{d}}
\newcommand{\eps}{\varepsilon}
\def\req#1{{\rm(\ref{eq:#1})}}
\newcommand{\dupdots}{\mathinner{\mkern1mu\raise\p@
    \vbox{\kern7\p@\hbox{.}}\mkern2mu
    \raise4\p@\hbox{.}\mkern2mu\raise7\p@\hbox{.}\mkern1mu}}
\newenvironment{cmatrix}{\left[\cmatrixc}{\endmatrix\right]}
\newcommand{\utilde}{{\widetilde u}}
\newcommand{\ftilde}{{\widetilde f}}
\newcommand{\dtilde}{{\widetilde d}}
\newcommand{\ktilde}{{\widetilde k}}
\newcommand{\Atilde}{{\widetilde A}}
\newcommand{\Dtilde}{{\widetilde D}}
\newcommand{\Ktilde}{{\widetilde K}}
\newcommand{\bfrhotilde}{{\widetilde\bfrho}}
\newcommand{\V}{{{\mathscr V}_u}}
\newcommand{\X}{{\mathscr X}}
\newcommand{\Y}{{\mathscr Y}}
\newcommand{\RR}{{\boldsymbol{R}}}
\newcommand{\bfrho}{{\boldsymbol{\rho}}}
\newcommand{\bfalpha}{{\boldsymbol{\alpha}}}
\newcommand{\bfphi}{{\boldsymbol{\varphi}}}
\newcommand{\dr}{\,\rmd r}
\newcommand{\dR}{\,\rmd\!R}
\newcommand{\dRR}{\,\rmd\!\RR}
\newcommand{\uu}{{u_*}}
\newcommand{\uo}{{u^*}}
\newcommand{\rs}{s}
\newcommand{\cbeta}{c_\beta}
\newcommand{\cbetad}{c_{\beta,d}}
\newcommand{\Cbeta}{C_\beta}
\newcommand{\go}{c_{\beta,d}^*}
\newcommand{\bfe}{{\boldsymbol{e_1}}}
\renewcommand\@biblabel[1]{#1.}
\title{Fr\'echet differentiability of molecular distribution functions I.
       \boldmath{$L^\infty$} analysis}
\author{Martin Hanke\thanks{Institut f\"ur Mathematik, Johannes
    Gutenberg-Universit\"at Mainz, 55099 Mainz, Germany
    ({\tt hanke@math.uni-mainz.de}). The research leading to this work
    has been done within the 
    Collaborative Research Center TRR 146; corresponding funding 
    by the DFG is gratefully acknowledged.}}
\begin{document}
\sloppy
\maketitle

\begin{abstract}
For a grand canonical ensemble of classical point-like particles at equilibrium
in continuous space we investigate the functional relationship between 
a stable and regular pair potential describing the interaction of the 
particles and the corresponding molecular distribution functions. 
For certain admissible perturbations of the pair potential and sufficiently 
small activity we rigorously establish Frechet differentiability with respect 
to the supremum norm in the image space -- both for bounded domains and in the
thermodynamical limit. 
\end{abstract}

\begin{keywords}
  Statistical mechanics, molecular distribution function, Fr\'echet derivative,
  Kirkwood-Salsburg equations
\end{keywords}

\begin{AMS}
  {\sc 82B21, 82B80}
\end{AMS}

\hspace*{-0.7em}
{\footnotesize \textbf{Last modified.} \today}

\pagestyle{myheadings}
\thispagestyle{plain}
\markboth{M. HANKE}
{FRECHET DIFFERENTIABILITY OF MOLECULAR DISTRIBUTION FUNCTIONS}

\addtocounter{footnote}{2}

\section{Introduction}
\label{Sec:Introduction}
We consider the grand canonical description of a continuous system of identical
classical particles in thermodynamical equilibrium; cf., e.g.,
Hansen and McDonald~\cite{HaMcD13}.
It is assumed that the potential
energy of the system is determined by a pair potential which only 
depends on the distance of the interacting particles. Under certain
additional assumptions on the potential (known as \emph{stability} and
\emph{regularity}) it has been rigorously proved in the 60's of the previous
century (see the monograph by Ruelle~\cite{Ruel69})
that the corresponding molecular distribution functions of the 
particles have a well-defined thermodynamical limit for a small enough
activity coefficient.
For example, the limiting singlet distribution function 
determines the (constant) number density of the particles; 
the corresponding pair distribution function which provides the
probability density of observing two particles in prescribed coordinates
at the same time only depends on the distance between the given coordinates
and gives rise to a so-called \emph{radial distribution function}. 
The \emph{inverse problem} whether a given radial distribution
function can be obtained as the thermodynamical limit of an equilibrium 
distribution for a certain pair potential is an open problem; 
there are only few partial results, 
for example by Henderson~\cite{Hend74} on uniqueness
and by Koralov~\cite{Kora07} on existence of corresponding solutions.

This inverse problem is fundamental for the development of efficient
multiscale algorithms for the numerical simulation of complex soft matter
phenomena, compare, for example, R\"uhle et al~\cite{RJLKA09}. Many of
these algorithms employ methods for coarse-graining complex molecules
and need to derive effective potentials for the coarse-grained \emph{beads} 
from measured data such as the radial distribution function. 

One of the algorithms for solving this inverse problem is the so-called
\emph{Inverse Monte-Carlo} method by Lyubartsev and Laaksonen~\cite{LyLa95},
which utilizes the well-known Newton method for the numerical solution of
nonlinear equations. As such, this method requires the 
derivative of the radial distribution function with respect to the pair
potential, and a formal discretized computation of this derivative for a
canonical ensemble is given in \emph{loc.\,cit.} 

Apparently a rigorous justification of this formula is yet lacking. 
This is the purpose of this present work, which is the first part of two
consecutive papers. A crucial ingredient is the choice of a proper
(and natural) topology for suitable perturbations of a given potential 
(see Proposition~\ref{Prop:U} below). The topology that we suggest allows
to exploit the well-known \emph{Kirkwood-Salsburg system} of equations 
and the corresponding theory in \cite{Ruel69} to rigorously determine 
the derivative of all molecular distribution functions (and their 
thermodynamical limit) with respect to the
underlying potential in the $L^\infty$ norm of the corresponding distribution
functions. We refer to Section~\ref{Sec:Problem} for a precise statement
of our results. 

As far as the radial distribution function is concerned 
it is also natural to investigate the differentiability of the so-called Ursell 
function (sometimes also referred to as 
\emph{pair correlation function}~\cite{HaMcD13}). 
In the thermodynamical limit the Ursell function is 
known to belong to $L^1(\R^3)$ as a function of the distance between 
its two input particle positions, cf.~\cite{Ruel69} again.
We can prove that the differentiability of the Ursell function extends to 
this topology, either. Since the proof requires a completely different set 
of tools we postpone this and related results to the follow-up 
paper~\cite{Hank16b}.

The outline of this first part is as follows. In the following section
the setting and basic assumptions of this work will be specified. There we also
formulate the two main results, Theorem~\ref{Thm:A} on the differentiability
of the molecular distribution functions, and Theorem~\ref{Thm:B} on the
thermodynamical limit of these derivatives. 
Section~\ref{Sec:A} is devoted to a proof of
Theorem~\ref{Thm:A} and Section~\ref{Sec:B} provides the proof of 
Theorem~\ref{Thm:B}. In the final Section~\ref{Sec:explicit} we reconsider
the explicit formula from~\cite{LyLa95} for the derivative of the 
pair distribution function in a bounded domain, and also provide a formula 
for the derivative of the singlet distribution function.

\section{Problem setting}
\label{Sec:Problem}
In the sequel we present our basic assumptions on the system under 
consideration and review some basic facts; most of them are well known, 
and unless stated otherwise, we refer to \cite{Ruel69} as a standard reference.

We consider a grand canonical ensemble of identical classical
point-like particles in a box $\Lambda\subset\R^3$ 
in thermodynamical equilibrium with defined positive inverse temperature
$\beta$ and activity $z$. Throughout this work we assume that
the box $\Lambda$ is a cube centered at the origin.
The interaction of the particles is given
by a pair potential $u:\R^+\to\R$, which only depends on the distance
between the corresponding particles. Such a pair potential is called 
\emph{stable}, if there exists a constant $B>0$ such that
\bdm
   \sum_{1\leq i<j\leq N} \!\! u(|R_i-R_j|) \,\geq\, -BN
\edm
for all configurations of $N$ labeled particles (and all $N\in\N$), 
where $\RR_N=(R_1,\dots,R_N)\in(\R^3)^N$ is the $N$-tupel with the
particle coordinates. A stable pair potential $u$ is called \emph{regular}, 
if the associated \emph{Mayer function}
\be{Mayer}
   f(R) \,=\, e^{-\beta u(|R|)} - 1
\ee 
belongs to $L^1(\R^3)$, i.e., if
\bdm
   \int_0^\infty |e^{-\beta u(r)}-1|\,r^2\!\dr \,<\, \infty\,.
\edm
   
In order to investigate differentiability with respect to $u$ we need to
allow some variability of the potential without disturbing the above two
properties. Therefore we will stipulate a slightly more restrictive 
but much more handy assumption on $u$.

\begin{assumption}
\label{Ass:u}
There exists $\rs>0$ and positive decreasing functions $\uu,\uo:\R^+\to\R$ 
with
\begin{align*}
   \int_0^\rs \uu(r)\,r^2\!\dr \,=\, \infty \qquad &\text{and} \qquad 
   \int_\rs^\infty \uo(r)\,r^2\!\dr \,<\, \infty\,,
\intertext{such that $u$ satisfies}
   u(r) \,\geq\, \uu(r)\,, \ \ r \,\leq\, \rs\,,\qquad &\text{and} \qquad
   |u(r)| \,\leq\, \uo(r)\,, \ \ r \,\geq\, \rs\,.
\end{align*}
\end{assumption}

An example of a pair potential that satisfies Assumption~\ref{Ass:u}
is the familiar \emph{Lennard-Jones potential}
\bdm
   u_{\text{LJ}}(r) \,=\, 
   4\eps\Bigl( \bigl(\frac{\sigma}{r}\bigr)^{12}  
               - \bigl(\frac{\sigma}{r}\bigr)^6\Bigr)
\edm
with parameters $\eps,\sigma>0$.

Given that the pair potential $u$ satisfies Assumption~\ref{Ass:u} 
we introduce the space $\V$ of \emph{perturbations} $v:\R^+\to\R$ for
which $|v|/u$ is bounded in $(0,s)$ and $|v|/\uo$ is bounded in $(s,\infty)$;
$\V$ is a Banach space when equipped with the norm
\be{V}
   \norm{v}_{\V} \,=\, 
   \max\{\,\norm{v/u}_{(0,s)},\norm{v/\uo}_{(s,\infty)}\,\}\,.
\ee
Here, and throughout, the notation $\norm{\,\cdot\,}_\Omega$ refers to the
supremum norm of a function acting from some interval $\Omega\subset\R^d$ 
into a given Banach space. As we see next, the topology of $\V$ defines an
open neighborhood of stable and regular pair potentials around the given $u$.

\begin{proposition}
\label{Prop:U}
Let $u$ satisfy Assumption~\ref{Ass:u}, $\V$ be the Banach space with
norm $\req{V}$, and $0<t_0<1$ be given. 
Then 
there are constants $B\geq 0$ and $\cbeta>0$ such that the potential
$\utilde=u+v$ satisfies
\be{B}
   \sum_{1\leq i<j\leq N} \!\! \utilde(|R_i-R_j|) \,\geq\, -BN
\ee
for every $\RR_N\in(\R^3)^N$ and
\be{cbeta}
   4\pi\int_0^\infty |e^{-\beta \utilde(r)}-1|\,r^2\!\dr \,\leq\, \cbeta
\ee
for every $v\in\V$ with $\norm{v}_\V\leq t_0$, i.e.,
$\utilde$ is stable and regular.
Moreover, for all $N\geq 2$ and all $\RR_N\in(\R^3)^N$
there exists $j^*=j^*(\RR_N)$ such that
\be{jstar}
   \sum_{i=1\atop \,\,\,i\neq j^*}^N \utilde(|R_i-R_{j^*}|) \,\geq\, -2B
   \ee
for every $\utilde=u+v$ with $\norm{v}_\V\leq t_0$; 
in particular, $\utilde$ is bounded by $-2B$ from below.
\end{proposition}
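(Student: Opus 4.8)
The plan is to handle the three claims \req{B}, \req{cbeta} and \req{jstar} in that order of difficulty, noting first that the last is a formal consequence of the first. Everything rests on the remark that $\norm{v}_\V\leq t_0$ forces, for $\utilde:=u+v$,
\bdm
   \utilde(r)\,\geq\,(1-t_0)\,\uu(r)\,>\,0\quad(0<r\leq\rs),
   \qquad
   |\utilde(r)|\,\leq\,(1+t_0)\,\uo(r)\quad(r\geq\rs),
\edm
since $u\geq\uu>0$ on $(0,\rs)$, $|u|\leq\uo$ on $(\rs,\infty)$, and $0<t_0<1$. Granting \req{B}, the estimate \req{jstar} follows by a pigeonhole argument: with $e_j:=\sum_{i\neq j}\utilde(|R_i-R_{j}|)$ one has $\sum_{j=1}^N e_j=2\sum_{1\leq i<j\leq N}\utilde(|R_i-R_j|)\geq -2BN$, so the index $j^*$ that maximises $e_j$ satisfies $e_{j^*}\geq -2B$; specialising \req{jstar} to $N=2$ then gives $\utilde(r)\geq -2B$ for every $r>0$. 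Thus it remains to establish \req{cbeta} and \req{B} with constants depending only on $u$, $\beta$ and $t_0$. (Configurations with coinciding positions give infinite energy by Assumption~\ref{Ass:u} and are discarded throughout.)

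The bound \req{cbeta} is elementary once the integral is split at $\rs$. On $(0,\rs)$ we have $\utilde>0$, hence $0\leq 1-e^{-\beta\utilde(r)}<1$ and $4\pi\int_0^\rs|e^{-\beta\utilde(r)}-1|\,r^2\dr<\tfrac{4}{3}\pi\rs^3$. On $(\rs,\infty)$ the elementary inequality $|e^x-1|\leq|x|\,e^{|x|}$ together with $|\utilde(r)|\leq(1+t_0)\uo(r)\leq(1+t_0)\uo(\rs)$ ($\uo$ decreasing) gives $|e^{-\beta\utilde(r)}-1|\leq\beta(1+t_0)\,e^{\beta(1+t_0)\uo(\rs)}\,\uo(r)$, which is integrable against $r^2\dr$ by Assumption~\ref{Ass:u}. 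Adding the two pieces yields a finite $\cbeta$ independent of $v$.

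For \req{B} I would cover $\R^3$ by a grid of cubes $Q_m$ of side $\ell$ with $\ell\sqrt3\leq\rs$ and, for a configuration $\RR_N$, write $n_m$ for the number of particles in $Q_m$. Splitting the pair energy according to whether the interparticle distance is $\leq\rs$ or $>\rs$, discarding the nonnegative contribution of inter-cube pairs at distance $\leq\rs$, inserting the two pointwise bounds above, and writing $\Sigma_m:=\sum_{i<j,\,R_i,R_j\in Q_m}\uu(|R_i-R_j|)$, one gets
\bdm
   \sum_{1\leq i<j\leq N}\utilde(|R_i-R_j|)\,\geq\,
   (1-t_0)\sum_m\Sigma_m-(1+t_0)\!\!\sum_{\substack{i<j\\ |R_i-R_j|>\rs}}\!\!\uo(|R_i-R_j|).
\edm
The far sum is controlled by bounding $\uo(|R_i-R_j|)$ by $\uo$ evaluated at the distance between the two cubes, summing over shells of cubes, and using $n_mn_{m'}\leq\tfrac12(n_m^2+n_{m'}^2)$: this gives $\sum_{i<j,\,|R_i-R_j|>\rs}\uo(|R_i-R_j|)\leq\tfrac12 G\sum_m n_m^2$ with $G=\sum_{d\geq1}\kappa(d)\,\uo(\max(\rs,(d-1)\ell))$, where $\kappa(d)=O(d^2)$ counts the cubes at sup-distance $d$ from a given one; $G<\infty$ by $\int_\rs^\infty\uo(r)r^2\dr<\infty$ and monotonicity of $\uo$. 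Granting the lower bound $\Sigma_m\geq c\,n_m^2\,\phi(n_m)-c'n_m$ with $\phi$ nondecreasing, $\phi(n)\to\infty$, and $c,c',\phi$ depending only on $u$ and $\ell$, the per-cube quantity $(1-t_0)(c\,n_m^2\phi(n_m)-c'n_m)-\tfrac12(1+t_0)G\,n_m^2$ is $\geq -B'n_m$ — its quadratic part is $\geq0$ once $n_m$ is large enough that $(1-t_0)c\,\phi(n_m)\geq\tfrac12(1+t_0)G$, and the deficit for the finitely many smaller occupancies is absorbed linearly — so summing over $m$ gives \req{B} with $B=B'$.

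The crux of the whole proof, and the one genuinely delicate point, is this lower bound on $\Sigma_m$; it is the only place where the full strength of $\int_0^\rs\uu(r)r^2\dr=\infty$ enters, and it expresses the familiar fact that confining $n$ particles to a small box costs a repulsion energy growing faster than $n^2$, the standard mechanism ruling out collapse (cf.\ \cite{Ruel69}). I would prove it by a dyadic subdivision of $Q_m$: at level $j$ the cube is cut into $8^j$ subcubes of diameter $\rho_j=\ell\sqrt3\,2^{-j}$; letting $M_j$ be the sum of the squares of the subcube occupancies, any pair lying in a common level-$j$ subcube but not in a common level-$(j+1)$ one is at distance $\leq\rho_j$, whence $\Sigma_m\geq\tfrac12\sum_{j\geq0}\uu(\rho_j)(M_j-M_{j+1})$. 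Summation by parts, using $M_0=n_m^2$, $M_j\to n_m$ (positions distinct), and $M_j\geq\max(n_m,n_m^2\,8^{-j})$ (the last by Cauchy--Schwarz), converts this into $\Sigma_m\geq\tfrac14 n_m^2\sum_{j=1}^{\lfloor\frac12\log_8 n_m\rfloor}\big(\uu(\rho_j)-\uu(\rho_{j-1})\big)8^{-j}-\tfrac12\uu(\rho_0)n_m$. Finally, since $\uu$ decreases, $\uu(\rho_j)8^{-j}$ dominates a fixed multiple of $\int_{\rho_j}^{\rho_{j-1}}\uu(r)r^2\dr$, so the series $\sum_{j\geq1}(\uu(\rho_j)-\uu(\rho_{j-1}))8^{-j}$ is comparable to $\int_0^{\ell\sqrt3}\uu(r)r^2\dr=\infty$ and its partial sums furnish the required divergent factor $\phi(n_m)$. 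I expect this dyadic estimate, together with the bookkeeping of the constant $G$, to be the main obstacle; the remaining steps are routine.
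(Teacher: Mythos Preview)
Your treatment of \req{cbeta} is correct and matches the paper's (which states the conclusion without spelling out the split at $\rs$). There is, however, a genuine gap in your derivation of \req{jstar}: the proposition demands a $j^*=j^*(\RR_N)$ that works \emph{simultaneously for every} $\utilde=u+v$ with $\norm{v}_\V\leq t_0$, whereas your pigeonhole selects the maximiser of $e_j=\sum_{i\neq j}\utilde(|R_i-R_j|)$, which depends on $v$. This uniformity is not decorative; it is invoked explicitly later (see the opening line of the proof of Lemma~\ref{Lem:d-derivative}). The repair is easy once noticed: your pointwise estimates already give a $v$-independent minorant $w$, namely $w(r)=(1-t_0)\uu(r)$ on $(0,\rs]$ and $w(r)=-(1+t_0)\uo(r)$ on $(\rs,\infty)$, with $\utilde\geq w$ for every admissible $v$. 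Your stability argument is in fact a proof that $w$ satisfies \req{B}; apply the pigeonhole to $w$ to get a $v$-independent $j^*$, and then $\sum_{i\neq j^*}\utilde(|R_i-R_{j^*}|)\geq\sum_{i\neq j^*}w(|R_i-R_{j^*}|)\geq-2B$ by monotonicity.

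For stability itself the paper takes a much shorter and rather different route: instead of the cube partition and dyadic estimate, it invokes the Fisher--Ruelle construction~\cite{FiRu66}, which from the same two pointwise bounds manufactures a single even function $\underline{u}$ of positive type minorising every admissible $\utilde$; stability with $B=\underline{u}(0)/2$ then follows from positive definiteness in one line, and the $v$-independence of $j^*$ comes for free because $\underline{u}$ is fixed. Your direct argument is more self-contained and makes the superstability mechanism visible (the divergent factor $\phi(n)$ coming from $\int_0^\rs\uu(r)r^2\dr=\infty$), at the price of substantially more bookkeeping; the summation-by-parts step in particular would need some tightening before it counts as a complete proof.
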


\begin{proof}
The crucial observation is, that for a given $t_0\in (0,1)$,
every $0<r\leq s$, and every $v\in\V$ with $\norm{v}_\V\leq t_0$ there holds
\begin{align*}
   \utilde(r) &\,\geq\, u(r)-|v(r)| \geq (1-t_0) u(r) \geq q \uu(r)
\intertext{with $q=1-t_0>0$; at the same time we have}
   |\utilde(r)| &\,\leq\, |u(r)|+|v(r)| \,\leq\, (1+t_0) \uo(r)
   \,\leq\, (2-q) \uo(r)
\end{align*}
for every $r\geq s$. From this we conclude that under the given assumptions
on $u$ and $v$ the regularity condition~\req{cbeta} does hold true
for some constant $\cbeta>0$ and all $v$ with $\norm{v}_\V\leq t_0$.
Concerning the stability of the
pair potential $\utilde$ we refer to the argument utilized by 
Fisher and Ruelle in \cite{FiRu66}: Their construction provides a universal
even minorant $\underline{u}:\R\to\R$ of positive type such that
\bdm
   \utilde(r) \,\geq\, \underline{u}(r)\,, \qquad r>0\,,
\edm
for all $v\in\V$ with $\norm{v}_\V\leq t_0$. Inequalities \req{B} and 
\req{jstar} then follow with $B=\underline{u}(0)/2$.
\end{proof}

Writing
\be{pairpotential}
   U_N(R_1,\dots,R_N) \,=\, \!\!\sum_{1\leq i<j\leq N} u(|R_i-R_j|)
\ee
for the configurational Hamiltonian of the system,
the \emph{molecular distribution function} $\rho_\Lambda^{(m)}$ for $m$
particles, $m\in\N$, being distributed in $\Lambda$ is defined to be
\be{rho-m}
   \rho^{(m)}_\Lambda(\RR_m)
   \,=\, \frac{1}{\Xi_\Lambda}
         \sum_{N=m}^\infty \frac{z^N}{(N-m)!}
         \int_{\Lambda^{N-m}}\!\! 
            e^{-\beta U_N(\RR_N)}
         \dRR_{m,N}\,,
\ee
where $\RR_m\in\Lambda^m$, $\RR_{m,N}$ denotes the $(N-m)$-tupel 
$(R_{m+1},\dots,R_N)\subset\Lambda^{N-m}$ with the
coordinates of additional $N-m$ particles, and
\be{Xi}
   \Xi_\Lambda \,=\, 
   \sum_{N=0}^\infty \frac{z^N}{N!}\int_{\Lambda^N} e^{-\beta U_N(\RR_N)}\dRR_N
\ee
is the \emph{grand canonical partition function} for a given inverse
temperature $\beta$ and activity $z$. The formulations in \req{rho-m}
and \req{Xi} obey the usual convention that the integral of the constant one 
over $\Lambda^0$ is considered to be one. 

It has been shown in \cite{Ruel69} that under the given assumptions on $u$
and for an inverse temperature $\beta$ and activity $z$ satisfying
\be{z}
   0 \,<\, z \,<\,
   \frac{1}{c_\beta e^{2\beta B+1}}
\ee
the distribution function $\rho_\Lambda^{(m)}$ is bounded in $\Lambda^m$, its
bound being independent of the size of $\Lambda$, and
in the \emph{thermodynamical limit} $|\Lambda|\to\infty$
the distribution functions converge compactly,
i.e., uniformly on every compact subset of $(\R^3)^m$;
we denote by $\rho^{(m)}:(\R^3)^m\to\R_0^+$ the corresponding limit function. 
When $m=1$ the resulting limit is the constant \emph{counting density} $\rho_0$
of the system; when $m=2$
the limit $\rho^{(2)}$ is invariant under translations and rotations and
one can define the \emph{radial distribution function} $g:\R^+\to\R^+_0$ by
\bdm
   g(r) \,=\, \frac{1}{\rho_0^2}\,\rho^{(2)}(R_1,R_2)\,, \qquad r=|R_1-R_2|\,.
\edm

In this work we investigate the dependence of the molecular distribution
functions on $u$ and we are going to prove the following two results.

\begin{result}
\label{Thm:A}
Let $u$ satisfy Assumption~\ref{Ass:u} and $z$ and $\beta$ be constrained
by \req{z}. Then for every $m\in\N$ and every box $\Lambda\subset\R^3$ 
the molecular distribution function $\rho_\Lambda^{(m)}$ has a well-defined 
Fr\'echet derivative 
$\partial\rho_\Lambda^{(m)}\in\L\bigl(\V,L^\infty(\Lambda^m)\bigr)$ with respect
to $u$, and $\rho^{(m)}$ has a Fr\'echet derivative
$\partial\rho^{(m)}\in\L\bigl(\V,L^\infty((\R^3)^m)\bigr)$.
\end{result}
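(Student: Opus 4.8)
The plan is to realise the molecular distribution functions as the solution of the Kirkwood--Salsburg fixed point equation and to differentiate that equation with respect to the potential. Following Ruelle~\cite{Ruel69}, I would collect the functions $\rho_\Lambda^{(m)}$, $m\in\N$, into a single element $\rho_\Lambda$ of the Banach space $\E_\xi$ of sequences $\varphi=(\varphi_m)_{m\in\N}$, $\varphi_m\in L^\infty(\Lambda^m)$, normed by $\norm{\varphi}_\xi=\sup_m\xi^{-m}\norm{\varphi_m}_{\Lambda^m}$ with $\xi=1/\cbeta$; in this space the distribution functions satisfy the fixed point equation $\rho_\Lambda=zK_u\rho_\Lambda+z\aa$, where $\aa$ is a fixed sequence independent of the potential and $K_u$ is the Kirkwood--Salsburg operator
\bdmal
  (K_u\varphi)_m(x_1,\dots,x_m) &= e^{-\beta\sum_{j=2}^m u(|x_1-x_j|)}
    \sum_{s\geq0}\frac{1}{s!}\int\prod_{i=1}^s\bigl(e^{-\beta u(|x_1-y_i|)}-1\bigr)\\
  &\qquad\qquad{}\times\varphi_{m-1+s}(x_2,\dots,x_m,y_1,\dots,y_s)\,\rmd y_1\cdots\rmd y_s
\edmal
(for $m=1$ the $s=0$ summand equals the constant $1$ and is peeled off into the inhomogeneity $z\aa$). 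By Proposition~\ref{Prop:U}, every $\utilde=u+v$ with $\norm{v}_\V\leq t_0$ is stable with the \emph{same} constant $B$ and regular with the \emph{same} constant $\cbeta$, and the distinguished particle $x_1$ may always be chosen according to \req{jstar}; rerunning Ruelle's estimate then bounds the operator norm of $zK_{\utilde}$ on $\E_\xi$ by a constant $\theta<1$ which, by \req{z}, is independent of $\Lambda$ and of $v$. Hence $\rho_{\utilde,\Lambda}=z(I-zK_{\utilde})^{-1}\aa$ is well defined throughout a fixed $\V$-ball about $u$, uniformly in $\Lambda$.

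Because $A\mapsto(I-A)^{-1}$ is analytic on $\{\norm{A}<1\}$ and $\aa$ does not depend on the potential, the chain rule reduces the Fr\'echet differentiability of $v\mapsto\rho_{u+v,\Lambda}$ -- with derivative $z\,(I-zK_u)^{-1}\,\partial K_u[\,\cdot\,]\,\rho_{u,\Lambda}$, which is automatically a bounded linear map $\V\to\E_\xi$ and hence, componentwise, $\V\to L^\infty(\Lambda^m)$ -- to the Fr\'echet differentiability at $v=0$ of the map $v\mapsto K_{u+v}$ into $\L(\E_\xi)$, with bounds uniform in $\Lambda$. The natural candidate $\partial K_u[v]$ is obtained by replacing, in every occurrence, the factor $e^{-\beta u(r)}$ by $-\beta e^{-\beta u(r)}v(r)$; it then remains to show that $v\mapsto\partial K_u[v]$ is bounded and linear into $\L(\E_\xi)$ and that $\norm{K_{u+v}-K_u-\partial K_u[v]}_{\L(\E_\xi)}\leq C\norm{v}_\V^2$.

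For the Mayer-function factors this follows, via the elementary inequalities $|e^{-\beta t}-1|\leq\beta|t|\,e^{\beta|t|}$ and $|e^{-\beta t}-1+\beta t|\leq\tfrac12\beta^2t^2e^{\beta|t|}$, from the observation that under Assumption~\ref{Ass:u} the functions $e^{-\beta u(r)}|v(r)|$ and $e^{-\beta u(r)}|v(r)|^2$ are bounded and $r^2$-integrable on $(0,\infty)$ with $L^1$-norms of order $\norm{v}_\V$ and $\norm{v}_\V^2$: on $(0,s)$ one has $|v|\leq\norm{v}_\V u$ and $\sup_{x>0}x^ke^{-\beta(1-t_0)x}<\infty$ for $k=1,2$, while on $(s,\infty)$ one has $|v|\leq\norm{v}_\V\uo$ with $\uo$ bounded and $r^2$-integrable. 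The hard part will be the finite product $e^{-\beta\sum_{j=2}^m u(|x_1-x_j|)}$, whose number $m-1$ of factors is unbounded on $\E_\xi$: its $v$-derivative is this same factor times $-\beta\sum_{j=2}^m v(|x_1-x_j|)$, and one must bound $e^{-\beta\sum_{j=2}^m u(|x_1-x_j|)}\sum_{j=2}^m\abs{v(|x_1-x_j|)}$ uniformly in $m$ and in the configuration. Here the choice \req{jstar} of the distinguished particle is indispensable: one splits the partners into those with $|x_1-x_j|\leq s$, where $u(|x_1-x_j|)>0$, $|v|\leq\norm{v}_\V u$, and the surviving factors together with the constraint $\sum_{j=2}^m u(|x_1-x_j|)\geq-2B$ keep the product bounded because $x\mapsto xe^{-\beta x}$ is bounded on $(0,\infty)$, and those with $|x_1-x_j|>s$, where $|u|$ and $|v|$ are bounded by multiples of $\uo$ and stability limits the number of partners not already suppressed by the Boltzmann factor; the quadratic remainder is estimated in the same manner.

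Finally, the same argument carries over to the thermodynamical limit: the limiting distribution functions $\rho^{(m)}$ are known~\cite{Ruel69} to be bounded and to satisfy the Kirkwood--Salsburg equations on $(\R^3)^m$ with the same operator $K_u$ (its kernel extended to $\R^3$) and the same uniform bounds supplied by Proposition~\ref{Prop:U}, so the identical reasoning produces the Fr\'echet derivative $\partial\rho^{(m)}\in\L\bigl(\V,L^\infty((\R^3)^m)\bigr)$.
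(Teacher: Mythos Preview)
Your overall strategy is exactly the paper's: write the Kirkwood--Salsburg equations as $(I-zA_\Lambda)\bfrho_\Lambda=z\bfe$ (the paper separates your $K_u$ into a diagonal multiplication $D_\Lambda$ by $d_m(\RR_m)=\exp\bigl(-\beta\sum_{i\neq j^*}u(|R_i-R_{j^*}|)\bigr)$ and an integral part $K_\Lambda$), show that $A_\Lambda$ is Fr\'echet differentiable in $\L(\V,\L(\X_\Lambda))$, and conclude via $(\partial\bfrho_\Lambda)v=z(I-zA_\Lambda)^{-1}\bigl((\partial A_\Lambda)v\bigr)\bfrho_\Lambda$. Your treatment of the Mayer factors matches the paper's Lemma on $\partial f$.

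The gap is in what you call ``the hard part'', the uniform-in-$m$ bound on $d_m(\RR_m)\sum_{i\neq j^*}|v(|R_i-R_{j^*}|)|$. Your near/far split does not close. For far partners ($|R_i-R_{j^*}|>s$) you invoke ``stability limits the number of partners not already suppressed by the Boltzmann factor'', but neither \req{B} nor \req{jstar} bounds the \emph{number} of far partners: one can place arbitrarily many $R_i$ at distance slightly larger than $s$ from $R_{j^*}$, each contributing $|v|\sim\norm{v}_\V\uo(s)$ to the sum while the Boltzmann factor stays near one, so $\sum_{\text{far}}|v|$ is unbounded. The near-partner argument has the same defect once you try to control the leftover factor $\prod_{k\neq j}e^{-\beta u_k}$. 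The paper's Lemma~\ref{Lem:d-derivative} avoids counting partners altogether by absorbing the linear (resp.\ quadratic) factor into the exponential via the elementary inequality $x\leq\gamma\,e^{x/\gamma}$ (resp.\ $x^2/2\leq\gamma^2e^{x/\gamma}$) for $x\geq0$: with $x=\beta\sum|v|$ and $\gamma=\norm{v}_\V/t_0$ one gets
\[
   d_m\,\beta\!\sum_{i\neq j^*}|v|\ \leq\ \gamma\,
   \exp\!\Bigl(-\beta\!\sum_{i\neq j^*}\bigl(u-|v|/\gamma\bigr)\Bigr),
\]
and since $\norm{\,|v|/\gamma\,}_\V=t_0$, Proposition~\ref{Prop:U} applies to the perturbed potential $u-|v|/\gamma$ and \req{jstar} bounds the exponent below by $-2\beta B$ \emph{with the same} $j^*$. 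This is precisely where the uniformity of $j^*(\RR_m)$ over the whole $\V$-ball (which you do note) is exploited, and it is the missing idea in your sketch.
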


\begin{result}
\label{Thm:B}
Under the assumptions of Theorem~\ref{Thm:A} the derivative operator
$\partial\rho_\Lambda^{(m)}$ converges to $\partial\rho^{(m)}$
in the thermodynamical limit $|\Lambda|\to\infty$
in the following sense: For any fixed bounded box $\Lambda'\subset\R^3$
and any $m\in\N$ there holds
\bdm
   \norm{(\partial\rho_\Lambda^{(m)})v - (\partial\rho^{(m)})v}_{{\Lambda'}^m}
   \,\to\, 0\,,
\edm
uniformly for $v\in\V$ with $\norm{v}_\V\leq 1$.
\end{result}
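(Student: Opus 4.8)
The plan is to carry the Kirkwood-Salsburg (KS) formalism, used in \cite{Ruel69} and in the proof of Theorem~\ref{Thm:A}, one step further. Recall that in the weighted Banach space $\E_\xi$ of sequences $\varphi=(\varphi^{(m)})_{m\geq1}$ of bounded functions on $(\R^3)^m$, normed by $\norm{\varphi}_\xi=\sup_m\xi^{-m}\norm{\varphi^{(m)}}_{(\R^3)^m}$, the family $\rho_\Lambda=(\rho_\Lambda^{(m)})_m$ is the unique fixed point of an affine KS map $\Phi_{u,\Lambda}$ whose $\rho$-linear part is $zL_{u,\Lambda}$, the operator $L_{u,\Lambda}$ being built from integrations over $\Lambda$ against products of Mayer functions and from multiplication by factors $e^{-\beta\sum_i u(|x_1-x_i|)}$. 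By \req{z} together with Proposition~\ref{Prop:U} one fixes $\xi$ so that $z\norm{L_{u,\Lambda}}_{\E_\xi}\leq q<1$ with $q$ independent of $\Lambda$, and the same holds in the limit, giving $L_{u,\infty}$ and $\rho=\Phi_{u,\infty}(\rho)$. Differentiating the fixed-point identities at $u$ in a direction $v\in\V$, as in the proof of Theorem~\ref{Thm:A}, one obtains $(\partial\rho_\Lambda)v=(I-zL_{u,\Lambda})^{-1}g_\Lambda[v]$ and $(\partial\rho)v=(I-zL_{u,\infty})^{-1}g_\infty[v]$, where $g_\Lambda[v]=(\partial_u\Phi_{u,\Lambda}\cdot v)(\rho_\Lambda)$ and $g_\infty[v]=(\partial_u\Phi_{u,\infty}\cdot v)(\rho)$ are explicit sequences. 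The norm on $\V$ was tailored precisely so that $|v(r)|\,e^{-\beta u(r)}$ is bounded on $(0,s)$ and $\int_0^\infty|v(r)|\,r^2\dr$ is finite, both with bounds proportional to $\norm{v}_\V$; this yields $\norm{g_\Lambda[v]}_\xi\leq C\norm{v}_\V$ and $\norm{g_\infty[v]}_\xi\leq C\norm{v}_\V$ with $C$ independent of $\Lambda$, and hence $\norm{(\partial\rho)v}_\xi\leq C'\norm{v}_\V$.

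Next I would set $\delta_\Lambda:=(\partial\rho_\Lambda)v-(\partial\rho)v$ and subtract the two characterising identities to get
\[
   (I-zL_{u,\Lambda})\,\delta_\Lambda
   \,=\, \bigl(g_\Lambda[v]-g_\infty[v]\bigr)
         \,+\, z\bigl(L_{u,\Lambda}-L_{u,\infty}\bigr)(\partial\rho)v
   \,=:\, h_\Lambda[v],
\]
so that $\sup_\Lambda\norm{h_\Lambda[v]}_\xi\leq C''\norm{v}_\V$. The crux is then to show that $h_\Lambda[v]\to0$ \emph{compactly} as $|\Lambda|\to\infty$, i.e.\ each $h_\Lambda^{(m)}[v]\to0$ uniformly on compact subsets of $(\R^3)^m$, and, crucially, \emph{uniformly over $v$ with $\norm{v}_\V\leq1$}. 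For the second summand this is immediate: $L_{u,\Lambda}-L_{u,\infty}$ acts only through integration over $\R^3\setminus\Lambda$ of kernels that are integrable (with weight $r^2$) in each integration variable, while $(\partial\rho)v$ stays in a fixed bounded set of $\E_\xi$. For the first summand one combines (i) the compact convergence $\rho_\Lambda\to\rho$ from \cite{Ruel69}, (ii) the passage from the integration domain $\Lambda$ to $\R^3$, and (iii) the fact that the direction $v$ enters the kernels of $\partial_u\Phi_{u,\Lambda}$ \emph{linearly}, with $L^1$-norms controlled by $\norm{v}_\V$ — which is exactly what makes the whole estimate uniform in $v$.

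Finally I would expand $(I-zL_{u,\Lambda})^{-1}=\sum_{n\geq0}z^nL_{u,\Lambda}^n$. The tail satisfies $\norm{\sum_{n>N}z^nL_{u,\Lambda}^n h_\Lambda[v]}_\xi\leq\frac{q^{N+1}}{1-q}\,C''\norm{v}_\V$, uniformly in $\Lambda$ and $v$, hence is negligible once $N$ is large. For each of the finitely many remaining iterates one uses that $L_{u,\Lambda}$ maps $\E_\xi$-bounded sequences converging to $0$ compactly to sequences with the same property, uniformly in $\Lambda$: splitting every KS integral into the part where all integration variables are bounded (there the input is uniformly small) and its complement (there the $L^1$-tails of the kernels are small) gives the estimate, exactly as in the classical proof of the thermodynamical limit of $\rho_\Lambda$ itself. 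Combining these, $\delta_\Lambda=(I-zL_{u,\Lambda})^{-1}h_\Lambda[v]\to0$ compactly and uniformly over $\norm{v}_\V\leq1$; in particular $\norm{\delta_\Lambda^{(m)}}_{{\Lambda'}^m}\to0$ uniformly over $\norm{v}_\V\leq1$, which is the assertion. I expect the main obstacle to be precisely this combination of geometric convergence of the Neumann series with compact convergence of every fixed iterate, carried out uniformly in $v$: one has to check that the \emph{differentiated} KS operators inherit the same quantitative locality and decay properties — with $v$-dependence bounded through the $\V$-norm — on which the original thermodynamical limit rests.
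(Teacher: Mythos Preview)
Your proposal is correct and follows essentially the same approach as the paper: both rest on the compact convergence of the differentiated inhomogeneity (your $g_\Lambda[v]-g_\infty[v]$, the paper's Lemma~\ref{Lem:Aprimelimit}) together with the Neumann-series/nested-box argument from Ruelle to push this through $(I-zA_\Lambda)^{-1}$, with uniformity in $v$ coming from the linear dependence of the kernels on $v$ and the bound~\req{Astrichrho-bd}. The only organisational difference is that the paper adds and subtracts $(I-zA_\Lambda)^{-1}P_\Lambda A'\bfrho$ to split into the two pieces~\req{term-I} and~\req{term-II} and treats them separately (recycling Ruelle's proof verbatim for the first), whereas you bundle both error sources into a single $h_\Lambda[v]$ and run the nested-box argument once; the technical content is the same.
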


The proofs of these two theorems utilize the 
Kirkwood-Salsburg equations. As such, the Fr\'echet derivatives are only
given implicitly as the solution of a semi-infinite linear system.
For a bounded box $\Lambda\subset\R^3$ and associated molecular distribution
functions $\rho_\Lambda^{(m)}$ more explicit formulae can be derived directly 
from \req{rho-m},
and we will do so for $m=1$ and $m=2$ in Section~\ref{Sec:explicit}; 
we mention, though, that
it is much more difficult to investigate the thermodynamical limit of the
latter and to prove differentiability of $\rho^{(m)}$ by this approach.

\section{Proof of Theorem~\ref{Thm:A}}
\label{Sec:A}
In the sequel we consider a generic box $\Lambda\subset\R^3$ 
centered at the origin, and, by some abuse of notation, 
we will even allow $\Lambda$ to be the entire space.

To begin with we recall the definition~\req{jstar} of $j^*(\RR_m)$ for a given 
$\RR_m\in\Lambda^m$, and associated with it we introduce
the (nonlinear) projection $\Pi_m:(\R^3)^m\to(R^3)^{m-1}$ via
\be{Pim}
   \Pi_m : \RR_m \,\mapsto\, (\RR_{j^*-1},\RR_{j^*\!,m})\,.
\ee
We also need to define the following two function sequences
$d_m:(\R^3)^m\to\R$, $m\in\N$, and $k_n:\R^3\times(\R^3)^n\to\R$, $n\in\N$, 
given by
\be{dm}
   d_m(\RR_m) \,=\, \prod_{i=1\atop \,\,\,i\neq j^*}^m\!e^{-\beta u(|R_i-R_{j^*}|)}\,,
   \qquad m\in\N\,,
\ee
and 
\be{kernel}
   k_n(R;\RR'_n) 
   \,=\, \prod_{i=1}^n f(R_i'-R)\,,
\ee
where $f$ is the Mayer function~\req{Mayer}.
The latter are utilized as kernel functions for the integral operators
\be{K-nu-tilde}
   \bigl(K_{mn,\Lambda} \varphi_{m+n-1}\bigr)(\RR_m) 
   \,=\, \frac{1}{n!} 
         \int_{\Lambda^{n}}
            k_n(R_{j^*};\RR'_n) \varphi_{m+n-1}
            \bigl(\Pi_m(\RR_m),\RR'_n\bigr)\dRR'_n
\ee
for $m,n\in\N$, and $\varphi_{m+n-1}:\Lambda^{m+n-1}\to\R$.
Finally, we need the extension operators
\be{I-nu-prime}
   (I_{m,\Lambda}\varphi_{m-1})(\RR_m) \,=\, \varphi_{m-1}\bigl(\Pi_m(\RR_m)\bigr)
\ee
for $m\in\N\setminus\{1\}$ and $\varphi_{m-1}:\Lambda^{m-1}\to\R$.

Following Ruelle~\cite{Ruel69} we introduce
the Banach space $\X_\Lambda$ of sequences $\bfphi=(\varphi_m)_m$
of bounded functions $\varphi_m:\Lambda^m\to\R$, for which the norm
\bdm
     \bigl\|\bfphi\bigr\|_{\X_\Lambda} \,=\, 
   \sup_{m\in\N} \cbeta^m \norm{\varphi_m}_{\Lambda^m}
\edm
with $\cbeta$ of \req{cbeta} is finite.
On $\X_\Lambda$ a diagonal multiplication operator
\be{DK-D}
   D_\Lambda \,=\, \begin{cmatrix} 
                    d_1 & \\
                    & d_2 & \\
                    & & \ddots & 
                 \end{cmatrix}
\ee
and a semi-infinite block integral operator
\be{DK-K}
   K_\Lambda \,=\, \begin{cmatrix} 
                    K_{11,\Lambda} & K_{12,\Lambda} & K_{13,\Lambda} &
                           \dots\phantom{\vdots} \\
                    I_{2,\Lambda}  & K_{21,\Lambda} & K_{22,\Lambda} &
                           \vphantom{\vdots} \\
                    0 & I_{3,\Lambda} & K_{31,\Lambda} & \ddots \\
                    \vdots & \ddots & \ddots & \ddots
           \end{cmatrix}
\ee
are defined by utilizing the functions $d_m$ of \req{dm} and the operators
$K_{mn,\Lambda}$ and $I_{m,\Lambda}$ of \req{K-nu-tilde} and \req{I-nu-prime},
respectively.
It has been shown in \cite{Ruel69} that $D_\Lambda$ and $K_\Lambda$ belong to
$\L(\X_\Lambda)$ with
\be{Ruelle-KS}
   \norm{D_\Lambda}_{\L(\X_\Lambda)} \,\leq\, e^{2\beta B}\,, \qquad
   \norm{K_\Lambda}_{\L(\X_\Lambda)} \,\leq\, \cbeta e\,.
\ee
These operators can now be used to formulate the celebrated Kirkwood-Salsburg 
equations:
\be{KS-prime}
   (I-zA_\Lambda) \bfrho_\Lambda \,=\, z\bfe\,, \qquad 
   A_\Lambda=D_\Lambda K_\Lambda\,,
\ee
where
\bdm
   \bfrho_\Lambda = (\rho^{(m)}_\Lambda)_m 
   \qquad \text{and} \qquad
   \bfe \,=\, (\delta_{1m})_m\,,
\edm
with $\delta_{1m}$ being the Kronecker symbol and $z$ being the activity.
By virtue of \req{z} the linear system \req{KS-prime} is uniquely solvable
for $\bfrho_\Lambda$.

To compute the derivative of the molecular distribution functions we need
to determine the derivatives of the functions $d$, $f$, and $k$ in 
appropriate topologies. This is the purpose of the following three auxiliary
results.

\begin{lemma}
\label{Lem:Mayer-derivative}
Let $u$ satisfy Assumption~\ref{Ass:u}. Then the Mayer $f$-function has a 
Fr\'echet derivative $\partial f\in\L\bigl(\V,L^1(\R^3)\bigr)$ with respect 
to $u$ given by
\bdm
   (\partial f)v \,=\, -\beta e^{-\beta u}v\,, 
   \qquad v\in\V\,,
\edm
i.e., there exists $\Cbeta>0$ such that
\begin{subequations}
\label{eq:Lem:Mayer-derivative-all}
\begin{align}
\label{eq:Lem:Mayer-derivative-bound}
   \norm{(\partial f)v}_{L^1(\R^3)} &\,\leq\, \Cbeta \norm{v}_{\V}\,,
\intertext{and for $\norm{v}_\V\leq t_0$
and $\ftilde$ the Mayer function associated with $\utilde=u+v$ there holds}
\label{eq:Calpha}
   \norm{\ftilde - f}_{L^1(\R^3)} &\,\leq\, \Cbeta \norm{v}_{\V}\,, \\[1ex]
\label{eq:Lem:Mayer-derivative}
   \norm{\ftilde - f - (\partial f)v}_{L^1(\R^3)} 
   &\,\leq\, \Cbeta \norm{v}_{\V}^2\,.
\end{align}
\end{subequations}
\end{lemma}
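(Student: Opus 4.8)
The plan is to estimate each of the three quantities by splitting the integral over $\R^3$ into the two regimes $|R|\le s$ and $|R|>s$, where the behaviour of $u$ and of the perturbation $v$ is controlled differently by Assumption~\ref{Ass:u} and by the norm $\req{V}$. Throughout I would write $\utilde=u+v$, $\ftilde=e^{-\beta\utilde}-1$, and $g=e^{-\beta u}$, and I would use the elementary pointwise inequalities $|e^{-x}-1|\le |x|e^{|x|}$ and $|e^{-x}-1+x|\le \tfrac12 x^2e^{|x|}$ for real $x$ (here with $x=\beta u(r)$, etc.), together with $|e^{-\beta\utilde}-e^{-\beta u}|\le \beta|v|\,e^{-\beta u}e^{\beta|v|}$ and the analogous second-order Taylor estimate. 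The point is that $e^{-\beta u}$ and $e^{-\beta\utilde}$ are \emph{bounded} functions on all of $\R^+$: indeed Proposition~\ref{Prop:U} gives $\utilde\ge -2B$ and the same argument (with $v\equiv 0$) gives $u\ge -2B$, so $e^{-\beta u}, e^{-\beta\utilde}\le e^{2\beta B}$ uniformly, and in the bound for $\norm{v}_\V\le t_0$ we also have $e^{\beta|v|}\le e^{\beta t_0\uo(r)}$ for $r\ge s$ while $|v|\le t_0|u|$ for $r\le s$.

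For the first-order bound $\req{eq:Lem:Mayer-derivative-bound}$, I would write $\norm{(\partial f)v}_{L^1}=4\pi\beta\int_0^\infty e^{-\beta u(r)}|v(r)|\,r^2\,\dr$ and split. On $(s,\infty)$ we have $|v(r)|\le \uo(r)\norm{v}_\V$ and $e^{-\beta u(r)}\le e^{2\beta B}$, so this piece is bounded by $4\pi\beta e^{2\beta B}\norm{v}_\V\int_s^\infty\uo(r)r^2\,\dr$, which is finite by Assumption~\ref{Ass:u}. On $(0,s)$ we have $|v(r)|\le u(r)\norm{v}_\V$ when $u(r)\ge 0$, so $e^{-\beta u(r)}|v(r)|\le \norm{v}_\V\, u(r)e^{-\beta u(r)}\le \norm{v}_\V\,\beta^{-1}e^{-1}$ using that $t\mapsto te^{-\beta t}$ is bounded by $(\beta e)^{-1}$ on $t\ge 0$; this uniform bound is integrable against $r^2$ on the bounded interval $(0,s)$. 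Since near $r=0$ Assumption~\ref{Ass:u} forces $u(r)\to+\infty$ (the integral of $\uu(r)r^2$ diverges and $\uu$ is decreasing), $u$ is eventually positive, so the case $u(r)<0$ only occurs on a bounded away-from-zero portion of $(0,s)$ where $e^{-\beta u}$ and $v/u$ give no trouble; I would absorb it into the $(s,\infty)$-type estimate. Collecting the pieces gives $\Cbeta$.

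For $\req{eq:Calpha}$ and $\req{eq:Lem:Mayer-derivative}$ the structure is identical but with the integrand replaced by $|e^{-\beta\utilde(r)}-e^{-\beta u(r)}|$ and $|e^{-\beta\utilde(r)}-e^{-\beta u(r)}+\beta e^{-\beta u(r)}v(r)|$ respectively. Using the Taylor estimates above these are pointwise dominated by $\beta e^{-\beta u(r)}|v(r)|e^{\beta|v(r)|}$ and by $\tfrac12\beta^2 e^{-\beta u(r)}v(r)^2 e^{\beta|v(r)|}$. On $(s,\infty)$, $e^{\beta|v|}\le e^{\beta t_0\uo(r)}$; since $\uo\in L^1(r^2\dr)$ and is decreasing it is bounded on $(s,\infty)$, so $e^{\beta t_0\uo(r)}\le C$ uniformly, and we proceed exactly as before, gaining a factor $\norm{v}_\V$ in the linear case and $\norm{v}_\V^2$ in the quadratic case. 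On $(0,s)$, since $|v|\le t_0 u$ there, we have $e^{\beta|v|}\le e^{\beta t_0 u}$ and hence $e^{-\beta u}e^{\beta|v|}\le e^{-\beta(1-t_0)u}$, and then $u e^{-\beta(1-t_0)u}$ and $u^2 e^{-\beta(1-t_0)u}$ are again uniformly bounded on $u\ge 0$ by constants depending only on $\beta$ and $t_0$; multiplying by $\norm{v}_\V$ or $\norm{v}_\V^2$ and integrating the bounded result against $r^2$ over the bounded interval $(0,s)$ closes the estimate. Finally $\req{eq:Lem:Mayer-derivative}$ establishes that $-\beta e^{-\beta u}v$ is the Fr\'echet derivative, and $\req{eq:Lem:Mayer-derivative-bound}$ shows it is a bounded operator $\V\to L^1(\R^3)$, so $\partial f\in\L(\V,L^1(\R^3))$.

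The only delicate point is the treatment of the near-origin singularity: one must be careful that the hypothesis only bounds $u$ from below by $\uu$ there, not from above, so $e^{-\beta u(r)}$ could be tiny but $v/u$ is still controlled, and conversely the possible sign change of $u$ inside $(0,s)$ must be handled. I expect this to be the main (though still routine) obstacle; everything else is a direct application of the two scalar Taylor inequalities together with the integrability built into Assumption~\ref{Ass:u} and the boundedness $\utilde\ge-2B$ from Proposition~\ref{Prop:U}.
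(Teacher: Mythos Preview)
Your approach is essentially the same as the paper's: split at $r=s$, use the scalar Taylor inequalities to reduce to integrands of the form $e^{-\beta(u-|v|)}|\beta v|^k r^2$, and then on $(s,\infty)$ exploit $|v|\le\norm{v}_\V\uo$ together with the lower bound $\utilde\ge -2B$, while on $(0,s)$ exploit $|v|\le\norm{v}_\V\,u$ and the boundedness of $t\mapsto t^k e^{-\beta(1-t_0)t}$ on $t\ge 0$. The paper folds $e^{\beta|v|}$ into $e^{-\beta(u-|v|)}$ directly rather than bounding it separately, but this is the same computation.

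One remark: the ``delicate point'' you flag at the end is not actually there. Assumption~\ref{Ass:u} requires $u(r)\ge\uu(r)$ for $r\le s$ with $\uu$ a \emph{positive} decreasing function, so $u>0$ on all of $(0,s]$ and no sign change of $u$ can occur in that interval. This also makes the norm $\norm{v/u}_{(0,s)}$ well defined and gives $|v|\le\norm{v}_\V\,u$ on $(0,s)$ without any caveat. With this observation your argument on $(0,s)$ goes through cleanly and the extra case distinction can be dropped.
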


\begin{proof}
For $v\in\V$ Taylor's theorem yields
\begin{align}
\nonumber
   \norm{\ftilde - f - (\partial f)v}_{L^1(\R^3)}
   &\,=\, 4\pi \int_0^\infty 
          \bigl| e^{-\beta\utilde(r)} - e^{-\beta u(r)} 
                 + \beta e^{-\beta u(r)} v(r)\bigr|\,r^2\!\dr \\[1ex]
\nonumber
   &\,=\, 4\pi \int_0^\infty e^{-\beta u(r)} \bigl| e^{-\beta v(r)} - 1 
                                          + \beta v(r)\bigr|\,r^2\!\dr \\[1ex]
\label{eq:Mayer-derivative-tmp}
   &\,\leq\, 2\pi \int_0^\infty 
                  e^{-\beta(u(r)-|v(r)|)}|\beta v(r)|^2 r^2\!\dr\,.
\end{align}
When $\norm{v}_{\V}\leq t_0$ then it follows from Proposition~\ref{Prop:U} that
\begin{align}
\nonumber
   e^{-\beta(u(r)-|v(r)|)}|\beta v(r)|^2 r^2
   &\,\leq\,\beta^2 e^{2\beta B} \norm{v}_{\V}^2 \uo(s)\, \uo(r) r^2\\
\intertext{for $r>\rs$, while for $0<r\leq \rs$ we have}
\nonumber
   e^{-\beta(u(r)-|v(r)|)}|\beta v(r)|^2 r^2
   &\,\leq\, \beta^2\norm{v}_{\V}^2 
   e^{-\beta(1-t_0)u(r)} u^2(r) s^2\\
   \nonumber
   &\,\leq\, \frac{4\rs^2}{e^2(1-t_0)^2}\,\norm{v}_{\V}^2\,.
\end{align}
Inserting these estimates into \req{Mayer-derivative-tmp} we readily arrive
at \req{Lem:Mayer-derivative}.

In much the same way we can estimate
\be{estimate2-tmp}
   e^{-\beta u(r)} |\beta v(r)| r^2 \,\leq\,
   \begin{cases}
      \beta e^{2\beta B} \norm{v}_{\V} \uo(r) r^2\,, & 
      r\geq \rs\,, \\
      (\rs^2/e) \norm{v}_{\V}\,, & 
      0 < r \leq r_0\,,
   \end{cases}
\ee
to deduce \req{Lem:Mayer-derivative-bound}.
Take note that \req{estimate2-tmp} holds true for every $v\in\V$.

When $u$ is replaced by $u-|v|$ with $\norm{v}_\V\leq t_0$ in
\req{estimate2-tmp} then the upper bound on the right-hand side increases
by at most $1/(1-t_0)$, and hence, 
\req{estimate2-tmp} also provides a convergent majorant to estimate
\bdm
   \norm{\ftilde - f}_{L^1(\R^3)}
   \,=\, 4\pi\!\int_0^\infty \!
         e^{-\beta u(r)}\bigl|e^{-\beta v(r)}-1\bigr|\,r^2\!\dr
   \,\leq\, 4\pi\!\int_0^\infty\!
            e^{-\beta (u(r)-|v(r)|)}|\beta v(r)|\,r^2\!\dr
\edm
for $\norm{v}_\V\leq t_0$, cf.~\req{Calpha}.
\end{proof}

\begin{lemma}
\label{Lem:d-derivative}
Let $u$ satisfy Assumption~\ref{Ass:u}. Then
the functions $d_m$ of \req{dm} are Fr\'echet differentiable with
respect to the pair potential with derivative 
$\partial d_m\in\L\bigl(\V,L^\infty(\R^3)\bigr)$ given by
\bdm
   \bigl((\partial d_m)v\bigr)(\RR_m)
   \,=\, -\beta d_m(\RR_m) 
         \!\sum_{i=1\atop \,\,\,i\neq j^*}^m \!v(|R_i-R_{j^*}|)
\edm
for $v\in\V$ and $\RR_m\in(\R^3)^m$. There holds
\begin{subequations}
\label{eq:Lem:d-derivative-all}
\begin{align}
\label{eq:Lem:d-derivative}
   \norm{(\partial d_m)v}_{(\R^3)^m} 
   &\,\leq\, \frac{e^{2\beta B}}{t_0}\,\norm{v}_{\V}
\intertext{and, if $\norm{v}_{\V}\leq t_0/2$ then}
\label{eq:dm-diff}
   \norm{\dtilde_m-d_m}_{(\R^3)^m} 
   &\,\leq\, \frac{2e^{2\beta B}}{t_0}\,\norm{v}_{\V}\,,\\[1ex]
\label{eq:18.x}
   \norm{\dtilde_m - d_m - (\partial d_m)v}_{(\R^3)^m}
   &\,\leq\, \frac{4e^{2\beta B}}{t_0^2}\,\norm{v}_{\V}^2 \,,
\end{align}
\end{subequations}
where $\dtilde_m$ denotes the function $d_m$ associated with the pair potential
$u+v$.
\end{lemma}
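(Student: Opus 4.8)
The plan is to treat $d_m$ as a finite product of exponentials of $u$ evaluated at the arguments $|R_i-R_{j^*}|$, so that differentiability follows by combining the elementary inequality $|e^a-1-a|\le \tfrac12 a^2 e^{|a|}$ for the scalar exponential with a telescoping argument to pass from a single factor to a product. First I would fix $\RR_m\in(\R^3)^m$ and write $j^*=j^*(\RR_m)$, so that by \req{jstar} the exponent $\sum_{i\neq j^*} u(|R_i-R_{j^*}|)$ is bounded below by $-2B$, hence $0<d_m(\RR_m)\le e^{2\beta B}$, and likewise for $\dtilde_m$ since $\utilde=u+v$ also satisfies \req{jstar} for $\norm{v}_\V\le t_0$. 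Writing $r_i=|R_i-R_{j^*}|$, I have
\bdm
   \dtilde_m(\RR_m) \,=\, d_m(\RR_m)\,\exp\Bigl(-\beta\!\!\sum_{i\neq j^*}\!v(r_i)\Bigr)\,,
\edm
so the three estimates all reduce to controlling the scalar quantity $S=\sum_{i\neq j^*}\beta v(r_i)$ and the differences $e^{-S}-1$ and $e^{-S}-1+S$, each multiplied by $d_m(\RR_m)\le e^{2\beta B}$.

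The key observation is that, although $m$ is arbitrary, at most one of the arguments $r_i$ can lie in $(0,\rs)$. Indeed, if two indices $i\neq k$ (both different from $j^*$) had $r_i,r_k\le\rs$, then $u(r_i),u(r_k)\ge\uu(\rs)>0$ while all other terms in $\sum_{i\neq j^*}\utilde(|R_i-R_{j^*}|)$ are bounded below via the Fisher--Ruelle minorant — but this is not quite the mechanism; rather, the correct route is to bound $|S|$ directly. For each $i$ with $r_i\le \rs$ we have $|v(r_i)|\le \norm{v}_\V\, u(r_i)$, and the total contribution of such terms to $\beta\sum_{i\neq j^*} u(r_i)$ is controlled because $d_m(\RR_m)\le e^{2\beta B}$ forces $\beta\sum_{i\neq j^*} u(r_i)^- $ — more cleanly, I would bound $|S|\le \norm{v}_\V\cdot\beta\bigl(\sum_{r_i\le\rs} u(r_i) + \sum_{r_i>\rs}\uo(r_i)\bigr)$ and observe that $\beta\sum_{i\neq j^*}\utilde(r_i)\ge -2B$ together with $|\utilde(r)|\le(2-q)\uo(r)$ for $r\ge\rs$ yields $\beta\sum_{r_i\le\rs} u(r_i)\le 2B/(1-t_0) + \beta(2-q)\sum_{r_i>\rs}\uo(r_i)$; however the $\uo$-sum over infinitely many points need not converge. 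The honest fix, and I believe what the paper intends, is that $\norm{(\partial d_m)v}$ in \req{Lem:d-derivative} is claimed with the bound $e^{2\beta B}\norm{v}_\V/t_0$ \emph{uniformly in} $m$, which forces $|S|\le \norm{v}_\V/t_0$ up to the factor $d_m\le e^{2\beta B}$; this in turn must come from $\beta\sum_{i\neq j^*}|v(r_i)|\le \norm{v}_\V\cdot\bigl(\beta\sum u(r_i)^+\text{-part}\bigr)$ being controlled by $1/t_0$, which I would derive from the fact that $t_0 u(r)\ge t_0\uu(r)$ and a summability built into how $j^*$ is chosen in Proposition~\ref{Prop:U}.

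Granting the uniform scalar bound $\beta\sum_{i\neq j^*}|v(r_i)|\le \norm{v}_\V/t_0$ on the support where $d_m$ is nonzero — equivalently, wherever $\utilde$ keeps the configuration admissible — the three inequalities follow mechanically. For \req{Lem:d-derivative} I write $(\partial d_m)v = -\beta d_m \sum_{i\neq j^*} v(r_i) = -d_m S$, so $\norm{(\partial d_m)v}\le e^{2\beta B}|S|\le e^{2\beta B}\norm{v}_\V/t_0$. For \req{dm-diff}, with $\norm{v}_\V\le t_0/2$ I get $|S|\le \tfrac12$, hence $|e^{-S}-1|\le |S|e^{|S|}\le |S|e^{1/2}\le 2|S|\le 2\norm{v}_\V/t_0$, and multiplying by $d_m\le e^{2\beta B}$ gives the claim. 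For \req{18.x}, $\dtilde_m - d_m - (\partial d_m)v = d_m(e^{-S}-1+S)$, and $|e^{-S}-1+S|\le \tfrac12 S^2 e^{|S|}\le \tfrac12 S^2 e^{1/2}\le S^2 \le \norm{v}_\V^2/t_0^2$, so multiplying by $e^{2\beta B}$ leaves $e^{2\beta B}\norm{v}_\V^2/t_0^2$; the stated constant $4e^{2\beta B}/t_0^2$ is then comfortably met (the extra factor $4$ absorbing a cruder bound on $e^{|S|}$). Linearity and boundedness of $v\mapsto (\partial d_m)v$ are immediate from the formula, giving $\partial d_m\in\L(\V,L^\infty((\R^3)^m))$. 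The main obstacle, and the step I would write out most carefully, is exactly the uniform-in-$m$ control of $\sum_{i\neq j^*}|v(r_i)|$: one must verify that the choice of $j^*$ in Proposition~\ref{Prop:U} (via the Fisher--Ruelle minorant of positive type) not only yields the lower bound \req{jstar} but simultaneously caps the positive part $\sum_{i\neq j^*} u(r_i)^+$ — or at least $\sum_{i\neq j^*}\uu(r_i)\eins_{r_i\le\rs} + \sum_{i\neq j^*}\uo(r_i)\eins_{r_i>\rs}$ — by a constant independent of $m$, which is what makes the $\X_\Lambda$-weighted framework close up. Everything after that is the routine scalar Taylor estimate above.
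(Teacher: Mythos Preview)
Your approach has a genuine gap at exactly the point you flag as the ``main obstacle'': the uniform bound $\beta\sum_{i\neq j^*}|v(r_i)|\le \norm{v}_{\V}/t_0$ simply does not hold. Take $\RR_m$ with all $r_i=|R_i-R_{j^*}|$ slightly larger than $\rs$; then $|v(r_i)|$ can be of order $\uo(\rs)$ for each $i$, and with $m$ particles the sum grows without bound. Nothing in the choice of $j^*$ via the Fisher--Ruelle minorant caps $\sum_{i\neq j^*}\uo(r_i)$ or $\sum_{i\neq j^*}u(r_i)^+$; the inequality \req{jstar} is only a \emph{lower} bound on $\sum\utilde(r_i)$. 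Your inference that the claimed bound \req{Lem:d-derivative} ``forces'' $|S|\le\norm{v}_{\V}/t_0$ is backwards: what is bounded is the product $d_m|S|$, and $d_m$ can be exponentially small precisely when $|S|$ is large.

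The paper's device avoids bounding $|S|$ altogether. Writing $S=\beta\sum_{i\neq j^*}v(r_i)$, one uses the elementary inequality $\tfrac12 x^2\le\gamma^2 e^{|x|/\gamma}$ (valid for every $\gamma>0$) to get
\bdm
   \bigl|\dtilde_m-d_m-(\partial d_m)v\bigr|
   \,\le\, \tfrac12\,e^{-\beta\sum u}\,S^2\,e^{\beta\sum|v|}
   \,\le\, \gamma^2\,e^{-\beta\sum\bigl(u-|v|-|v|/\gamma\bigr)}\,.
\edm
The point is that the troublesome sum has been pushed back into the exponent, where it combines with $-\beta\sum u$. Choosing $\gamma=2\norm{v}_{\V}/t_0$ makes the total perturbation $|v|+|v|/\gamma$ have $\V$-norm at most $t_0$ whenever $\norm{v}_{\V}\le t_0/2$, so Proposition~\ref{Prop:U} applies to the modified potential $u-|v|-|v|/\gamma$ and \req{jstar} gives $e^{-\beta\sum(u-|v|-|v|/\gamma)}\le e^{2\beta B}$, yielding \req{18.x} directly. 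The bounds \req{Lem:d-derivative} and \req{dm-diff} follow by the same trick using $|x|\le\gamma e^{|x|/\gamma}$ in place of the quadratic inequality. No control of $\sum|v(r_i)|$ by itself is ever needed.
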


\begin{proof}
For the proof of this result it is essential that the index $j^*=j^*(\RR_m)$
of \req{jstar} is independent of the particular pair potential $\utilde=u+v$
as long as $\norm{v}_\V\leq t_0$.
For $v\in\V$ and fixed $\RR_m\in(\R^3)^m$ we use Taylor's theorem to estimate
\bdm
   \bigl|\dtilde_m - d_m - (\partial d_m)v\bigr| 
   \,=\, \bigl| d_m(e^{-\beta \sum v} - 1 + \beta\sum v) \bigr|
   \,\leq\, \frac{1}{2}\,\bigl| e^{-\beta\sum u} (\beta\sum v)^2 
                                e^{\beta\sum |v|}\bigr|\,,
\edm 
where all sums extend over every $i=1,\dots,m$ with $i\neq j^*$,
the respective arguments being given by $|R_i-R_{j^*}|$.
For every $\gamma>0$ there holds
\bdm
   \frac{1}{2}\,(\beta\sum v)^2
   \,\leq\, \gamma^2\,\frac{1}{2}\bigl(\frac{\beta}{\gamma}\sum |v|\bigr)^2
   \,\leq\, \gamma^2 e^{\beta\sum |v|/\gamma}\,,
\edm
and hence
\bdm
   \bigl|\dtilde_m - d_m - (\partial d_m)v\bigr| 
   \,\leq\, \gamma^2 e^{-\beta\sum (u - |v| - |v|/\gamma)}\,.
\edm
For $\gamma=2\norm{v}_{\V}/t_0$ 
the assertion~\req{18.x} now follows from \req{jstar} with 
$\utilde = u - |v| - |v|/\gamma$
provided that $\norm{v}_{\V}\leq t_0/2$.
This proves the differentiability of $d_m$. 
The proof of the two estimates \req{Lem:d-derivative} and \req{dm-diff}
follows along the same lines and is left to the reader. 
\end{proof}

To formulate the following result we introduce
the formal derivative of $k_n$ of \req{kernel} with respect to $u$ in
direction $v$, i.e.,
\be{kprime}
   k_n'(R;\RR_n') \,=\, 
   \sum_{i=1}^n \bigl((\partial f)v\bigr)(|R_i'-R|)
                       \prod_{j=1\atop j\neq i}^n f(R_j'-R)
\ee
for $R\in\R^3$ and $\RR_N'\in(\R^3)^n$.

\begin{proposition}
\label{Prop:K-derivative}
Under the assumptions of Theorem~\ref{Thm:A}
the operator $K_\Lambda$ is Fr\'echet differentiable with respect to $u$ in 
$\L\bigl(\V,\L(\X_\Lambda)\bigr)$, and its derivative $\partial K_\Lambda$ is
given by
\be{DKprime}
   (\partial K_\Lambda)v
   \,=\, \begin{cmatrix} 
            K_{11,\Lambda}' & K_{12,\Lambda}' & K_{13,\Lambda}' &
                      \dots\phantom{\vdots} \\
            0       & K_{21,\Lambda}' & K_{22,\Lambda}' & \vphantom{\vdots} \\
            \vdots  & 0       & K_{31,\Lambda}' & \ddots \\
            \vdots  &         & \ddots  & \ddots
         \end{cmatrix},
\ee
where
\bdm
   (K_{mn,\Lambda}'\varphi_{m+n-1})(\RR_m)
   \,=\, \frac{1}{n!} 
         \int_{\Lambda^n} k_n'(R_{j^*};\RR_n')
            \varphi_{m+n-1}\bigl(\Pi_m(\RR_m),\RR_n'\bigr)\dRR_n'
\edm
for $m\in\N$, with $\varphi_{m+n-1}:\Lambda^{m+n-1}\to\R$, $\Pi_m$ of \req{Pim},
and $k_n'$ of \req{kprime}.
\end{proposition}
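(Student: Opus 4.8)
The plan is to regard $\widetilde K_\Lambda$, the Kirkwood--Salsburg operator of the perturbed potential $\utilde=u+v$, as acting on the \emph{same} Banach space $\X_\Lambda$: by Proposition~\ref{Prop:U} the weight $\cbeta$ serves every $\utilde$ with $\norm v_\V\le t_0$, and, as already exploited in the proof of Lemma~\ref{Lem:d-derivative}, the index $j^*=j^*(\RR_m)$ — hence the projection $\Pi_m$ of \req{Pim} and the subdiagonal extension operators $I_{m,\Lambda}$ of \req{DK-K} — are independent of $v$. Consequently the subdiagonal of $\widetilde K_\Lambda$ equals that of $K_\Lambda$ and cancels in $\widetilde K_\Lambda-K_\Lambda$, in accordance with the fact that the proposed $(\partial K_\Lambda)v$ of \req{DKprime} has no subdiagonal; the $(m,n)$ block of $\widetilde K_\Lambda-K_\Lambda-(\partial K_\Lambda)v$ is therefore the integral operator of the form \req{K-nu-tilde} with kernel $\ktilde_n-k_n-k_n'$, where $\ktilde_n(R;\RR_n')=\prod_{i=1}^n\ftilde(R_i'-R)$ and $k_n'$ is given by \req{kprime}. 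It thus suffices to bound the operator norm of this difference on $\X_\Lambda$.

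First I would isolate the elementary estimate behind Ruelle's bounds \req{Ruelle-KS}: for $g_1,\dots,g_n\in L^1(\R^3)$, Fubini's theorem gives
\bdm
   \int_{\Lambda^n}\prod_{i=1}^n\bigl|g_i(R_i'-R)\bigr|\,\rmd\RR_n'
   \,\le\,\prod_{i=1}^n\norm{g_i}_{L^1(\R^3)}\,,\qquad R\in\R^3,
\edm
so that an operator of the form \req{K-nu-tilde} with kernel $\prod_ig_i(R_{j^*}-\,\cdot\,)$ maps $L^\infty(\Lambda^{m+n-1})$ into $L^\infty(\Lambda^m)$ with norm at most $\tfrac1{n!}\prod_i\norm{g_i}_{L^1(\R^3)}$. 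Applying this to $k_n'$ of \req{kprime} — one factor being $(\partial f)v$, the remaining $n-1$ being $f$ — and using Lemma~\ref{Lem:Mayer-derivative} together with $\norm f_{L^1(\R^3)}\le\cbeta$ (put $v=0$ in \req{cbeta}) yields $\norm{k_n'(R;\,\cdot\,)}_{L^1(\Lambda^n)}\le n\,\Cbeta\norm v_\V\cbeta^{n-1}$. Summing the block contributions against the weight of $\X_\Lambda$ — row $m$, column $m+n-1$, produces the factor $\cbeta^m/\cbeta^{m+n-1}=\cbeta^{1-n}$, so the $m$-dependence drops out and $\sum_{n\ge1}\tfrac1{n!}\,n\,\Cbeta=e\Cbeta$ remains — shows that $(\partial K_\Lambda)v$ of \req{DKprime} is a well-defined element of $\L(\X_\Lambda)$ with $\norm{(\partial K_\Lambda)v}_{\L(\X_\Lambda)}\le e\Cbeta\norm v_\V$; in particular $\partial K_\Lambda\in\L\bigl(\V,\L(\X_\Lambda)\bigr)$.

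The substance of the proof is the second-order estimate for the kernel remainder. Writing $a_i=f(R_i'-R)$, $b_i=\ftilde(R_i'-R)$ and $\delta_i=\bigl((\partial f)v\bigr)(|R_i'-R|)$, and iterating the telescoping identity $\prod_ib_i-\prod_ia_i=\sum_i(b_i-a_i)\prod_{j<i}b_j\prod_{j>i}a_j$, one rewrites $\prod_ib_i-\prod_ia_i-\sum_i\delta_i\prod_{j\ne i}a_j$ as a sum of products in which every summand carries \emph{either} exactly one factor of the form $b_i-a_i-\delta_i$ \emph{or} exactly two factors drawn from $\{b_l-a_l\}$ and $\{\delta_i\}$, all other factors being $a_j$'s or $b_j$'s. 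Integration over $\Lambda^n$ factorizes these products; inserting the three bounds of Lemma~\ref{Lem:Mayer-derivative}, cf.~\req{Lem:Mayer-derivative-all} (valid for $\norm v_\V\le t_0$), together with $\norm f_{L^1(\R^3)},\norm{\ftilde}_{L^1(\R^3)}\le\cbeta$, then gives, with the $L^1$ norm taken in the $\RR_n'$ variable and uniformly in $R$,
\bdm
   \norm{\ktilde_n-k_n-k_n'}_{L^1(\Lambda^n)}
   \,\le\,\Bigl(n\,\Cbeta\cbeta^{n-1}+\tbinom n2\,\Cbeta^2\cbeta^{n-2}\Bigr)\norm v_\V^2\,,
\edm
the first term accounting for the $n$ summands of the first type and the second for the $\tbinom n2$ summands of the second type.

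Feeding this bound into the mapping estimate of the second paragraph and performing the same weighted summation, the factors $\tfrac1{n!}\cbeta^{-n}$ once more absorb the combinatorial growth, the series over $n\ge1$ converges, and one obtains
\bdm
   \norm{\widetilde K_\Lambda-K_\Lambda-(\partial K_\Lambda)v}_{\L(\X_\Lambda)}
   \,\le\,\bigl(e\Cbeta+\tfrac{e\Cbeta^2}{2\cbeta}\bigr)\norm v_\V^2
\edm
for all $v\in\V$ with $\norm v_\V\le t_0$, which is exactly Fr\'echet differentiability of $K_\Lambda$ at $u$ with derivative $\partial K_\Lambda$ (linearity of $v\mapsto(\partial K_\Lambda)v$ is clear from the linear dependence of $k_n'$ on $(\partial f)v$). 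I do not expect a genuine obstacle: the only real care required is bookkeeping — respecting the triangular block structure, exploiting the $v$-independence of $\Pi_m$ and $I_{m,\Lambda}$ so that only the Mayer-function factors of the kernels are perturbed, carrying through the combinatorics of the telescoped product, and keeping the convergence-inducing weights $\cbeta^m$ and factorials $1/n!$ in view throughout.
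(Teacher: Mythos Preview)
Your proposal is correct and follows the same overall strategy as the paper: reduce the operator remainder $\widetilde K_\Lambda-K_\Lambda-(\partial K_\Lambda)v$ to an $L^1$ estimate (uniform in $R$) on the kernel remainder $\ktilde_n-k_n-k_n'$, and then sum the block contributions using the Ruelle-type bound for the $\X_\Lambda$ operator norm. Your observation that $j^*$, $\Pi_m$, and the subdiagonal $I_{m,\Lambda}$ are $v$-independent, so that only the Mayer-function products are perturbed, is exactly the mechanism that makes the block-by-block analysis work.

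The one genuine difference is in how the kernel estimate itself is obtained. The paper proceeds by induction on $n$: it first establishes $\norm{\ktilde_n-k_n}_{\Y_n}\le n\Cbeta\cbeta^{n-1}\norm v_\V$ via the recursion $\ktilde_{n+1}-k_{n+1}=\ktilde_n(\ftilde_{n+1}-f_{n+1})+(\ktilde_n-k_n)f_{n+1}$, and then uses this inside a second induction to get $\norm{\ktilde_n-k_n-k_n'}_{\Y_n}\le n^2\Cbeta'\cbeta^{n-1}\norm v_\V^2$ with $\Cbeta'=\max\{\Cbeta,\Cbeta^2/(2\cbeta)\}$. You instead expand the product remainder directly via the iterated telescoping identity, obtaining the closed-form bound $n\Cbeta\cbeta^{n-1}+\tbinom n2\Cbeta^2\cbeta^{n-2}$ in one stroke. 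Your combinatorial decomposition is a bit more transparent and yields a marginally sharper constant ($e\Cbeta+\tfrac{e\Cbeta^2}{2\cbeta}$ versus the paper's $2e\Cbeta'$), while the paper's inductive argument avoids having to keep track of which factors are $a$'s and which are $b$'s in the mixed products. Either route is perfectly adequate here.
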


\begin{proof}
Let $\Y_n$ be the Banach space of functions $k:\R^3\times(\R^3)^n\to\R$ 
with norm
\bdm
   \norm{k}_{\Y_n}
   \,=\, \sup_{R\in\R^3} \norm{k(R;\,\cdot\,)}_{L^1((\R^3)^n)}\,.
\edm
We prove that $k_n$ of \req{kernel} is differentiable with respect to $u$
in $\L(\V,\Y_n)$, and that its derivative $(\partial k_n)v$ in direction $v$
is given by $k_n'$ of \req{kprime}. 
From Lemma~\ref{Lem:Mayer-derivative} we readily conclude that
$k_n'\in\Y_n$ with
\be{kprimenorm}
   \norm{k_n'}_{\Y_n}
   \,\leq\, n \Cbeta \cbeta^{n-1} \norm{v}_{\V}\,.
\ee

We now assume that $\norm{v}_{\V}\leq t_0$ and
introduce short-hand notations $\ktilde_n$ for the kernel $k_n$ associated
with the pair potential $\utilde=u+v$, and let
\bdm
   f_i(R) \,=\, e^{-\beta u(|R_i'-R|)}-1 \qquad \text{and} \qquad
   \ftilde_i(R) \,=\, e^{-\beta \utilde(|R_i'-R|)}-1
\edm
for every appropriate value of $i$; in the sequel we also omit the obvious
arguments $R_i'$ and $R$, respectively.
Then we prove by induction on $n\in\N$ that
\be{kn-induction1}
   \norm{\ktilde_n-k_n}_{\Y_n}
   \,\leq\, n\Cbeta \cbeta^{n-1}\,\norm{v}_{\V}\,,\qquad 
   n\in\N\,,
\ee
which is obviously true because of \req{Calpha} when $n=1$, 
because $\ktilde_1-k_1=\ftilde_1-f_1$. The inductive step is then based on
\bdmal
   \ktilde_{n+1}-k_{n+1}
   &\,=\, \ktilde_n \ftilde_{n+1} - k_nf_{n+1}
    \,=\, \ktilde_n(\ftilde_{n+1}-f_{n+1}) \,+\, (\ktilde_n-k_n)f_{n+1}\\[1ex]
   &\,=\, \ftilde_1\cdots\ftilde_n(\ftilde_{n+1}-f_{n+1})
          \,+\, (\ktilde_n-k_n)f_{n+1}\,,
\edmal
the induction hypothesis, and on the estimates \req{Calpha} and \req{cbeta}.

Still assuming that $\norm{v}_{\V}\leq t_0$ we now
proceed to establish the inequality
\be{kn-induction2}
   \norm{\ktilde_n - k_n - k_n'}_{\Y_n}
   \,\leq\, n^2 \Cbeta' \cbeta^{n-1} \norm{v}^2_{\V}\,, \qquad 
   n\in\N\,,
\ee
for $\Cbeta'=\max\{\Cbeta,\Cbeta^2/(2\cbeta)\}$, which proves the 
asserted differentiability of $k_n$.
For $n=1$ inequality~\req{kn-induction2} has been established in the proof of
Lemma~\ref{Lem:Mayer-derivative}, cf.~\req{Lem:Mayer-derivative}.
For the induction step we write
\bdmal
   \ktilde_{n+1} - k_{n+1} - k_{n+1}'
   &\,=\, \ktilde_n\ftilde_{n+1} - k_nf_{n+1} - k_n(\partial f_{n+1})v_{n+1}
          - k_n'f_{n+1}\\[1ex]
   &\,=\, (\ktilde_n-k_n-k_n')f_{n+1}
          + (\ktilde_n-k_n)(\ftilde_{n+1}-f_{n+1})\\
   &\phantom{\,=\, (\ktilde_n-k_n-k_n')f_{n+1}}\
        + f_1\cdots f_n(\ftilde_{n+1}-f_{n+1}-(\partial f_{n+1})v_{n+1})\,,
\edmal
where we have set
\bdm
   v_{n+1} \,=\, v(|R_{n+1}'-R|)\,.
\edm
Then it follows from the induction hypothesis~\req{kn-induction2}, 
and from \req{cbeta}, \req{kn-induction1}, \req{Calpha}, and 
\req{Lem:Mayer-derivative} that
\bdm
   \norm{\ktilde_{n+1}-k_{n+1}-k_{n+1}'}_{\Y_{n+1}}
   \,\leq\, \bigl(n^2\Cbeta'\cbeta^n \,+\, n\Cbeta^2 \cbeta^{n-1}
                  \,+\, \cbeta^n \Cbeta
            \bigr)\norm{v}^2_{\V}\,.
\edm
Inserting the definition of $\Cbeta'$ we further conclude that
the above right-hand side satisfies
\bdm
   \bigl(n^2\Cbeta' \,+\, 2n\,\frac{\Cbeta^2}{2\cbeta}\,+\, \Cbeta
   \bigr)\cbeta^n\norm{v}^2_{\V}
   \,\leq\, (n+1)^2\Cbeta'\cbeta^n\norm{v}^2_{\V}\,,
\edm
hence the induction step is complete.

Having established \req{kn-induction2} we can now argue as in \cite{Ruel69}
to show that
\bdm
   \norm{\Ktilde_\Lambda - K_\Lambda - (\partial K_\Lambda)v}_{\L(\X_\Lambda)}
   \,\leq\, 2e\Cbeta'\norm{v}_{\V}^2\,,
\edm
where $\Ktilde_\Lambda$ denotes the block integral operator~\req{DK-K}
associated with $\utilde$, $(\partial K_\Lambda)v$ is defined in \req{DKprime},
and $\norm{v}_{\V}\leq t_0$. This shows that $\partial K_\Lambda$ of \req{DKprime}
is the Fr\'echet derivative of $K_\Lambda$ when considered a function of the
pair potential.
\end{proof}

Now we can establish Theorem~\ref{Thm:A}.

\begin{xproof}{\ref{Thm:A}}
From Lemma~\ref{Lem:d-derivative} it follows readily
that $D_\Lambda$ is also Fr\'echet differentiable as a
function of $u$, and so is $A_\Lambda=D_\Lambda K_\Lambda$,
and the derivative $\partial A_\Lambda$ of $A_\Lambda$ is given by
\bdm
   (\partial A_\Lambda)v =
   ((\partial D_\Lambda)v) K_\Lambda + D_\Lambda(\partial K_\Lambda)v
\edm
for $v\in\V$. Denote by $\Dtilde_\Lambda$ and $\Ktilde_\Lambda$ 
the operators~\req{DK-D} and \req{DK-K} associated with the pair potential
$\utilde = u+v$, and set $\Atilde_\Lambda=\Dtilde_\Lambda\Ktilde_\Lambda$. 
Moreover, let $\bfrhotilde_\Lambda$ be the respective 
sequence of molecular distribution functions.
Then, for $\norm{v}_\V\leq t_0$ there holds
\bdmal
   \bfrhotilde_\Lambda - \bfrho_\Lambda
   &\,=\, \bigl((I-z\Atilde_\Lambda)^{-1} - (I-zA_\Lambda)^{-1}\bigr)z\bfe\\
   &\,=\, z\bigl(I-z\Atilde_\Lambda)^{-1}
          (\Atilde_\Lambda-A_\Lambda)(I-zA_\Lambda)^{-1}z\bfe\\
   &\,=\, z(I-zA_\Lambda)^{-1}
          \bigl((\partial A_\Lambda)v\bigr)(I-zA_\Lambda)^{-1}z\bfe
          \,+\, O(\norm{v}^2_{\V})
\edmal
in $\X_\Lambda$.
Accordingly, $\bfrho_\Lambda$ is Fr\'echet differentiable, and its 
Fr\'echet derivative $\partial\bfrho_\Lambda\in\L(\V,\X_\Lambda)$ satisfies
\bdm
   (\partial\bfrho_\Lambda)v 
   \,=\, z(I-zA_\Lambda)^{-1}\bigl((\partial A_\Lambda)v\bigr)(I-zA_\Lambda)^{-1}z\bfe
   \,=\, z(I-zA_\Lambda)^{-1}\bigl((\partial A_\Lambda)v\bigr)\bfrho_\Lambda\,.
\edm
This implies the statement of Theorem~\ref{Thm:A}.
\end{xproof}

\section{Proof of Theorem~\ref{Thm:B}}
\label{Sec:B}
In this section we investigate the thermodynamical limit of the
sequence $\bfrho_\Lambda=(\rho_\Lambda^{(m)})_m\in\X_\Lambda$
of molecular distribution functions; throughout this section 
$\Lambda\subset\R^3$ always denotes a finite size box.
Take note that when $\Lambda'\subset\Lambda$ then any sequence
$\bfphi_\Lambda=(\varphi_\Lambda^{(m)})_m\in\X_\Lambda$ also belongs to
$\X_{\Lambda'}$. We will make repeated use of this property without further
mentioning by taking the corresponding norm $\norm{\bfphi_\Lambda}_{\X_{\Lambda'}}$
of $\bfphi_\Lambda$.
Often, however, it will be convenient to make this imbedding more explicit.
To this end we introduce the corresponding imbedding operator
\be{P}
   P_{\Lambda'}\,:\, \X_\Lambda \to \X_{\Lambda'}\,, \qquad
   P_{\Lambda'}\bfphi_\Lambda =\bigl(\varphi_\Lambda^{(m)}\big|_{(\Lambda')^m}\bigr)_m\,.
\ee
To simplify matters, though, we will not specify explicitly the domain 
$\X_{\Lambda}$ of $P_{\Lambda'}$ in this notation; this domain may differ depending
on the context.

In the previous section we have shown that
$\bfrho_\Lambda$ as well as its thermodynamical limit $\bfrho$ 
are differentiable with respect to $u$, their derivative(s)
being given by
\be{rho-derivatives}
\begin{aligned}
   (\partial\bfrho_\Lambda)v 
   &\,=\, z(I-zA_\Lambda)^{-1}A_\Lambda'\bfrho_\Lambda\,,
   \\
   (\partial\bfrho)v 
   &\,=\, z(I-zA)^{-1}A'\bfrho\,,
\end{aligned}
\ee
respectively. 
In \req{rho-derivatives} we have used short-hand notations $A'$ and
$A_\Lambda'$ for $(\partial A)v$ and $(\partial A_\Lambda)v$, respectively.
We will continue to do so and similarly for
the operators $D$, $K$, $D_\Lambda$, and $K_\Lambda$
throughout the remainder of this section 
as long as $v\in\V$ is fixed.

The proof of Theorem~\ref{Thm:B} will proceed in three steps:
First, in Lemma~\ref{Lem:Aprimelimit}, we will show that
for two boxes $\Lambda''\subset\Lambda$ with $\Lambda''$ being kept fixed
we have convergence
\be{Lem:Aprimelimit}
   \norm{A_\Lambda'\bfrho_\Lambda - A'\bfrho}_{\X_{\Lambda''}} \,\to\, 0 \qquad
   \text{as $|\Lambda|\to\infty$}\,.
\ee
Then, in a second step we fix another box $\Lambda'\subset\Lambda$, 
and we argue that
\be{term-I}
   \norm{(I-zA_\Lambda)^{-1}P_\Lambda A'\bfrho
         \,-\, (I-zA)^{-1}A'\bfrho}_{\X_{\Lambda'}}
   \,\to\, 0 \qquad \text{as $|\Lambda|\to\infty$}\,.
\ee
Third, we apply \req{Lem:Aprimelimit} in a setting with three boxes
$\Lambda'\subset\Lambda''\subset\Lambda$ to show that
\be{term-II}
   \norm{(I-zA_\Lambda)^{-1}P_\Lambda
         (A'\bfrho-A_\Lambda'\bfrho_\Lambda)}_{\X_{\Lambda'}}
   \,\to\, 0 \qquad \text{as $|\Lambda|\to\infty$}\,,
\ee
provided that $\Lambda'$ is kept fixed.
A combination of \req{term-I} and \req{term-II} then readily yields
the desired convergence
\bdm
   \norm{(\partial\bfrho_\Lambda)v-(\partial\bfrho)v}_{\X_{\Lambda'}} \,\to\,0
\edm
as $|\Lambda|\to\infty$, which completes the proof of Theorem~\ref{Thm:B}.

\begin{lemma}
\label{Lem:Aprimelimit}
Let $u$ satisfy the assumptions of Theorem~\ref{Thm:A}
and let $A_\Lambda'$ and $A'$ be defined as above for a given
$v\in\V$. Then, for a fixed box $\Lambda''\subset\R^3$ and for
$\Lambda''\subset\Lambda\subset\R^3$ the convergence \req{Lem:Aprimelimit}
holds true
uniformly for all $v\in\V$ with $\norm{v}_\V\leq 1$.
\end{lemma}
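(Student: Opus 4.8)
The plan is to decompose $A_\Lambda'\bfrho_\Lambda - A'\bfrho$ into two pieces which are handled by different mechanisms: a \emph{truncation} piece, which exploits that the operators $A'$, $A_\Lambda'$ act with kernels decaying in $L^1$ and that the distribution functions are uniformly bounded; and a \emph{convergence} piece, which uses the compact convergence $\rho_\Lambda^{(m)}\to\rho^{(m)}$ on compacta already established in \cite{Ruel69}. Write
\bdm
   A_\Lambda'\bfrho_\Lambda - A'\bfrho
   \,=\, A_\Lambda'(\bfrho_\Lambda-\bfrho)
         \,+\, (A_\Lambda'-A')\bfrho\,,
\edm
and estimate each term in the $\X_{\Lambda''}$ norm. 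Recall from the proof of Proposition~\ref{Prop:K-derivative} and from Lemma~\ref{Lem:d-derivative} that $A_\Lambda' = D_\Lambda'K_\Lambda + D_\Lambda K_\Lambda'$, so the row-$m$ component of $A_\Lambda'\bfphi$ is a finite sum of integral operators whose kernels are built from $f$, $k_n'$, and the bounded multipliers $d_m$, $(\partial d_m)v$; by \req{kprimenorm}, \req{Lem:d-derivative}, and \req{Ruelle-KS} these are bounded on $\X$ uniformly in $\Lambda$ and uniformly for $\norm{v}_\V\le1$.

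For the first term, $A_\Lambda'(\bfrho_\Lambda-\bfrho)$, the difficulty is that $\bfrho_\Lambda-\bfrho$ is only small on compacta, not in $\X_\Lambda$. The remedy is a cutoff: fix a large auxiliary box $\Lambda_R$ with $\Lambda''\subset\Lambda_R$, split the integration domain in $K_{mn,\Lambda}'$ (and in the extension/$I$ entries) into $\Lambda_R^n$ and its complement, and observe that on $\Lambda_R^n$ the integrand involves $\rho_\Lambda^{(m+n-1)}-\rho^{(m+n-1)}$ evaluated at points whose first $m-1$ coordinates lie in $\Lambda''$ and whose remaining coordinates lie in $\Lambda_R$, hence on a fixed compact set, so this part tends to $0$ as $|\Lambda|\to\infty$ by compact convergence (for each fixed $m$; the geometric weights $\cbeta^m$ together with the uniform $L^\infty$ bound on $\bfrho_\Lambda$, $\bfrho$ let one truncate the supremum over $m$ at finite cost). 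On the complement $\Lambda_R^n\setminus$ (small region near $R_{j^*}$) the kernels $k_n'$, resp.\ $f$, integrate to something that can be made arbitrarily small by choosing $R$ large, using $f\in L^1(\R^3)$, $(\partial f)v\in L^1(\R^3)$ (Lemma~\ref{Lem:Mayer-derivative}), and the uniform bound $\norm{\bfrho_\Lambda}_{\X_\Lambda}\le$ const; this is uniform in $\Lambda$ and in $\norm{v}_\V\le1$. Choosing first $R$ large, then $|\Lambda|$ large, drives this term to $0$.

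For the second term, $(A_\Lambda'-A')\bfrho$, the point is that the entries of $A_\Lambda'$ and $A'$ differ only through the domain of integration: $A_\Lambda'$ integrates over $\Lambda^n$ while $A'$ integrates over $(\R^3)^n$, with the \emph{same} integrands (the kernels $f$, $k_n'$ and the multipliers $d_m$, $(\partial d_m)v$ do not depend on $\Lambda$). Hence row $m$ of $(A_\Lambda'-A')\bfrho$ is, componentwise, an integral over $(\R^3)^n\setminus\Lambda^n$ of an $L^1$ kernel times the bounded function $\rho^{(m+n-1)}$; since $\Lambda$ is a cube centered at the origin with $|\Lambda|\to\infty$ and the relevant points $R_{j^*}$ range only over $\Lambda''$ (a fixed bounded set), the region $(\R^3)^n\setminus\Lambda^n$ recedes to infinity, and absolute continuity of the $L^1$ norms of $k_n'$, $f$ forces this to $0$, again uniformly for $\norm{v}_\V\le1$ after summing the geometric series in $m$. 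Combining the two estimates gives \req{Lem:Aprimelimit}.

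I expect the main obstacle to be the bookkeeping in the first term: one must simultaneously (i) truncate the outer supremum over $m$ using the geometric weights and the $\Lambda$-uniform $L^\infty$ bounds on the distribution functions, (ii) localize the inner integrations to a fixed compact box so that compact convergence of $\rho_\Lambda^{(m)}$ applies, and (iii) control the tail contributions by $L^1$-smallness of the Mayer-type kernels — and verify that all three estimates are uniform in $v$ with $\norm{v}_\V\le1$. The uniformity in $v$ is exactly what Lemmas~\ref{Lem:Mayer-derivative} and \ref{Lem:d-derivative} and \req{kprimenorm} are designed to supply (all the relevant bounds are linear in $\norm{v}_\V$), so once the three-way split is organized carefully the conclusion follows.
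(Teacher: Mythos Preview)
Your proposal is correct and follows essentially the same approach as the paper: introduce an auxiliary box between $\Lambda''$ and $\Lambda$, use the compact convergence $\bfrho_\Lambda\to\bfrho$ (in the weighted $\X$ norm) on that box, and control the remaining tail contributions via the $L^1$-smallness of the Mayer-type kernels $k_n$, $k_n'$, with uniformity in $v$ coming from \req{Lem:Mayer-derivative-all}, \req{Lem:d-derivative-all}, and \req{kprimenorm}. The paper organizes the argument via the product-rule split $A'=D'K+DK'$ and estimates $K_\Lambda\bfrho_\Lambda-K\bfrho$ and $K_\Lambda'\bfrho_\Lambda-K'\bfrho$ directly rather than your split $A_\Lambda'(\bfrho_\Lambda-\bfrho)+(A_\Lambda'-A')\bfrho$, and it never needs to truncate the supremum over $m$ --- Ruelle's convergence is already stated as $\norm{\bfrho_\Lambda-\bfrho}_{\X_{\Lambda^*}}\le\eps$, so the resulting bounds are automatically uniform in $m$ and your step (i) is unnecessary.
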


\begin{proof}
Let $\eps>0$, $d>s$, and a certain box $\Lambda''\subset\R^3$ be given.
We choose a second box $\Lambda^*\supset\Lambda''$
in such a way that $|R''-R|>d$ for every
$R''\in\Lambda''$ and every $R\in\R^3\setminus\Lambda^*$.
According to \cite{Ruel69} there holds
\bdm
   \norm{\bfrho_\Lambda - \bfrho}_{\X_{\Lambda^*}} \,\leq\,\eps\,,
\edm
provided that $\Lambda\supset\Lambda^*$ is sufficiently large; moreover,
\bdm
   \norm{\bfrho}_{\X_{\R^3}}, 
   \norm{\bfrho_\Lambda}_{\X_{\Lambda}}
   \,\leq\, C_*
\edm
for some $C_*>0$, independent of the size of $\Lambda$.
Then, for any $\RR_m\in(\Lambda'')^m$ and any $n\in\N$ we can estimate
\begin{align}
\nonumber
   &\!\!\!\!\bigl|
      \bigl(K_{mn,\Lambda}\rho_\Lambda^{(m+n-1)}\bigr)(\RR_m)
           \,-\,\bigl(K_{mn}\rho^{m+n-1}\bigr)(\RR_m)
   \bigr|\\[1ex]
\nonumber
   &\!\!\!\!
    \,=\, \frac{1}{n!}\,\Bigg|
            \int_{\Lambda^n} \!\!\!
                k_n(R_{j^*};\RR_n')\rho^{(m+n-1)}_\Lambda(\Pi_m(\RR_m),\RR_n')
            \dRR_n'
            \\
\nonumber
   &\!\!\!\! \qquad \qquad \qquad
            \,-\,
            \int_{(\R^3)^n} \!\!\!
                k_n(R_{j^*};\RR_n')\rho^{(m+n-1)}(\Pi_m(\RR_m),\RR_n')\dRR_n'
          \Bigg|\\[1ex]
\nonumber
   &\!\!\!\!
    \,=\, \frac{1}{n!}\,\Bigg|
            \int_{(\Lambda^*)^n} \!\!\!
               k_n(R_{j^*};\RR_n')
                  \bigl(\rho^{(m+n-1)}_\Lambda(\Pi_m(\RR_m),\RR_n')
                                 - \rho^{(m+n-1)}(\Pi_m(\RR_m),\RR_n')\bigr)
            \dRR_n'
            \\
\nonumber
   &\!\!\!\! \qquad \qquad \qquad
            \,+\,
            \int_{\Lambda^n\setminus(\Lambda^*)^n} \!\!\!
                k_n(R_{j^*};\RR_n')\rho^{(m+n-1)}_\Lambda(\Pi_m(\RR_m),\RR_n')
            \dRR_n'\\
\nonumber
   &\!\!\!\! \qquad \qquad \qquad
            \,-\,
            \int_{(\R^3)^n\setminus(\Lambda^*)^n} \!\!\!
                k_n(R_{j^*};\RR_n')\rho^{(m+n-1)}(\Pi_m(\RR_m),\RR_n')\dRR_n'
          \Bigg|\\[1ex]
\label{eq:Lem:Aprimelimit-tmp}
   &\!\!\!\!
    \,\leq\, \frac{\cbeta^{1-m-n}}{n!}\Bigl(\eps\!
             \int_{(\R^3)^n}\!\bigl|k_n(R_{j^*};\RR_n')\bigr|\dRR_n'
             + 
             2C_*\!
             \int_{(\R^3)^n\setminus(\Lambda^*)^n} \!
                \bigl|k_n(R_{j^*};\RR_n')\bigr|\dRR_n'\Bigr)\,.
\end{align}
The first integral in \req{Lem:Aprimelimit-tmp} is bounded by $\cbeta^n$, the
second integral can be estimated by $n\cbeta^{n-1}\cbetad$,
where
\be{cbetad}
   \cbetad \,=\, 4\pi\int_d^\infty |e^{-\beta \utilde(r)}-1|\,r^2\!\dr \,,
\ee
compare the proof of \cite[Theorem~4.2.3]{Ruel69}. We thus have
\bdm
   \cbeta^m
   \norm{K_{mn,\Lambda}\rho_\Lambda^{(m+n-1)}-
         K_{mn}\rho^{(m+n-1)}}_{(\Lambda'')^m}
   \,\leq\, \frac{\cbeta}{n!}\,\eps
            \,+\, \frac{2C_*}{(n-1)!} \, \cbetad\,,
\edm
which readily yields
\bdmal
   \norm{K_\Lambda\bfrho_\Lambda - K\bfrho}_{\X_{\Lambda''}}
    \,\leq\, e(\cbeta \eps + 2C_* \cbetad)\,.
\edmal
Since we can let $d$ be arbitrarily large and $\eps>0$ be arbitrarily small
by choosing $|\Lambda^*|$ and $|\Lambda|$ sufficiently big, it follows that
\bdm
   \norm{K_\Lambda\bfrho_\Lambda - K\bfrho}_{\X_{\Lambda''}}
   \,\to\, 0 \qquad \text{as $|\Lambda|\to\infty$}\,,
   \edm
and hence, cf.~\req{Lem:d-derivative}, 
\bdmal
   &\norm{D_\Lambda'K_\Lambda\bfrho_\Lambda - D'K\bfrho}_{\X_{\Lambda''}}
    \,=\, \norm{D_{\Lambda''}'P_{\Lambda''}
                (K_\Lambda\bfrho_\Lambda-K\bfrho)}_{\X_{\Lambda''}}\\[1ex]
   &\qquad
    \,\leq\, \norm{D'_{\Lambda''}}_{\X_{\Lambda''}} 
             \norm{K_\Lambda\bfrho_\Lambda - K\bfrho}_{\X_{\Lambda''}}
    \,\leq\, \frac{e^{2\beta B}}{t_0}\,
             \norm{K_\Lambda\bfrho_\Lambda - K\bfrho}_{\X_{\Lambda''}}
             \norm{v}_{\V}
    \,\to\, 0
\edmal
as $|\Lambda|\to\infty$, uniformly for all $v\in\V$ with $\norm{v}_\V\leq 1$.

In a similar way we can estimate
\bdm
   \norm{D_\Lambda K_\Lambda'\bfrho_\Lambda 
         - DK'\bfrho}_{\X_{\Lambda''}}\,,
\edm
the main difference for this estimate being that $k_n$ in 
\req{Lem:Aprimelimit-tmp} has to be replaced by $k_n'$ of \req{kprime}, i.e.,
\be{Kprimelimit}
\begin{aligned}
   &\!\!\!\!\bigl|\bigl(K_{mn,\Lambda}'
                        \rho_\Lambda^{(m+n-1)}\bigr)(\RR_m)
           \,-\,\bigl(K_{mn}'\rho^{m+n-1}\bigr)(\RR_m)
    \bigr|\\[1ex]
   &\!\!\!\!\
    \leq\, \frac{\cbeta^{1-m-n}}{n!}\Bigl( 
             \eps\!\int_{(\R^3)^n}\!\!\bigl|k_n'(R_{j^*};\RR_n')\bigr|\!\dRR_n'
             \,+\, 
             2C_*\!\int_{(\R^3)^n\setminus(\Lambda^*)^n} \!\!
                   \bigl|k_n'(R_{j^*};\RR_n')\bigr|\!\dRR_n'\Bigr)
\end{aligned}
\ee
for $\RR_m\in(\Lambda'')^m$.
Here we can use the bound \req{kprimenorm} for the first integral, 
but the second integral requires a different estimate.
Following the line of argument employed previously, we obtain from \req{kprime}
the inequality
\bdmal
   &\int_{(\R^3)^n\setminus(\Lambda^*)^n} \!\!
                \bigl|k_n'(R_{j^*};\RR_n')\bigr|\!\dRR_n'\\[1ex]
   &\ \,\leq\, \sum_{i=1}^n 
             \int_{(\R^3)^n\setminus(\Lambda^*)^n} \!
             \bigl|\bigl((\partial f)v\bigr)(|R_i'-R_{j^*}|)\bigr|\,
                \prod_{j\neq i} \,\bigl|f(|R_j'-R_{j^*}|)\bigr|
             \dRR_n'\\[1ex]
   &\ \,\leq\, \sum_{i=1}^n 
               \Biggl( (n-1) 
                       \int_{\R^3} 
                       \bigl|\bigl((\partial f)v\bigr)(|R|)\bigr|\!\dR \
                       \Bigl(\int_{\R^3} \bigl|f(|R|)\bigr|\!\dR
                       \Bigr)^{n-2}\!\!
                       \int_{\R^3\setminus\Lambda^*} 
                          \bigl|f(|R-R_{j^*}|)\bigr|\!\dR \\[1ex]
   &\qquad \qquad
      +\, \int_{\R^3\setminus\Lambda^*} 
             \bigl| \bigl((\partial f)v\bigr)(|R-R_{j^*}|)\bigr|\!\dR \
                 \Bigl(\int_{\R^3} \bigl|f(|R|)\bigr|\!\dR
                 \Bigr)^{n-1}
               \Biggr)\,.
\edmal
Using \req{cbeta}, \req{cbetad}, and \req{Lem:Mayer-derivative-bound}
this yields 
\bdmal
   &\int_{(\R^3)^n\setminus(\Lambda^*)^n} \!\!
               \bigl|k_n'(R_{j^*};\RR_n')\bigr|\dRR_n' \\
   &\qquad \,\leq\, \,n(n-1) \cbeta^{n-2}\cbetad\Cbeta \norm{v}_{\V}
   +\, n \cbeta^{n-1} \!\!\int_{|R|>d} 
             \bigl| \bigl((\partial f)v\bigr)(|R|)\bigr|\!\dR \,.
\edmal
Since $d>s$ the remaining integral can be estimated by means of
\req{estimate2-tmp}, and hence,
\bdmal
   \int_{|R|>d} 
   \bigl| \bigl((\partial f)v\bigr)(|R|)\bigr|\!\dR
   &\,=\,4\pi \int_d^\infty \beta e^{-\beta u(r)}\bigl|v(r)\bigr|
                               \,r^2\dr \\[1ex]
   &\,\leq\,4\pi \beta e^{2\beta B} \norm{v}_{\V}
            \int_d^\infty \uo(r)\,r^2\dr
    \,=:\, \go \norm{v}_{\V}\,.
    \edmal
Take note that $\go\to 0$ for $d\to\infty$.
We thus conclude from \req{Kprimelimit} that
\bdmal
   &\cbeta^m\,
    \bigl\|\bigl(K_{mn,\Lambda}'\rho_\Lambda^{(m+n-1)}\bigr)(\RR_m)
           \,-\,\bigl(K_{mn}'\rho^{m+n-1}\bigr)(\RR_m)
    \bigr\|_{(\Lambda'')^m}\\[1ex]
   &\qquad
    \leq\, \Bigl(\frac{\Cbeta}{(n-1)!}\,\eps\,+\,
                 \frac{2C_* \Cbeta}{\cbeta} \,\frac{n-1}{(n-1)!}\,\cbetad
                 \,+\, \frac{2C_*}{(n-1)!}\, \go 
           \Bigr)\norm{v}_{\V}\,.
\edmal
In view of \req{Ruelle-KS} it follows that
\bdmal
   &\norm{D_\Lambda K_\Lambda'\bfrho_\Lambda 
          - DK'\bfrho}_{\X_{\Lambda''}}
    \,=\, \norm{D_{\Lambda''}P_{\Lambda''}
            (K_\Lambda'\bfrho_\Lambda - K'\bfrho)}_{\X_{\Lambda''}}\\[1ex]
   &\,\leq\, \norm{D_{\Lambda''}}_{\X_{\Lambda''}}
             \norm{K_\Lambda'\bfrho_\Lambda 
                   - K'\bfrho}_{\X_{\Lambda''}}
    \,\leq\, e^{2\beta B+1}
             \Bigl(\Cbeta\eps\,+\,\frac{2C_* \Cbeta}{\cbeta} \,\cbetad
                   \,+\, 2C_* \go \Bigr)\norm{v}_{\V}\,,
\edmal
which can be made arbitrarily small by choosing $|\Lambda^*|$ and $|\Lambda|$
sufficiently big, uniformly for all $v\in\V$ with $\norm{v}_\V\leq 1$.

Since 
\bdm
   A_\Lambda'\bfrho_\Lambda - A'\bfrho \,=\, 
   (D_\Lambda'K_\Lambda\bfrho_\Lambda - D'K\bfrho)
   \,+\,(D_\Lambda K_\Lambda'\bfrho_\Lambda-DK'\bfrho)
\edm
we thus have established the assertion.
\end{proof}

As mentioned before another ingredient to the proof of Theorem~\ref{Thm:B}
is the statement~\req{term-I}, which we reformulate here in a slightly
more precise way.

\begin{lemma}
\label{Lem:term-I}
Let $u$ satisfy the assumptions of Theorem~\ref{Thm:A}
and let $A'$ be defined as above for a given $v\in\V$.
Then, for a fixed box $\Lambda'\subset\R^3$ and for
$\Lambda'\subset\Lambda\subset\R^3$ the convergence
\bdm
   \norm{(I-zA_\Lambda)^{-1}P_\Lambda A'\bfrho
         \,-\, (I-zA)^{-1}A'\bfrho}_{\X_{\Lambda'}}
   \,\to\, 0 \qquad \text{as $|\Lambda|\to\infty$}
\edm
holds true uniformly for all $v\in\V$ with $\norm{v}_\V\leq 1$.
\end{lemma}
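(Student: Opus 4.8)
The plan is to reduce the statement to the mechanism by which Ruelle~\cite{Ruel69} establishes the compact convergence $\bfrho_\Lambda\to\bfrho$, now applied with the source term of the Kirkwood--Salsburg system~\req{KS-prime} replaced by $\psi:=A'\bfrho$. First I would record that $\psi$ is a fixed element of $\X_{\R^3}$ whose norm is bounded uniformly on the unit ball of $\V$. Indeed, $(\partial D_\Lambda)v$ is a diagonal operator with $\norm{(\partial D_\Lambda)v}_{\L(\X_\Lambda)}\le e^{2\beta B}t_0^{-1}\norm{v}_\V$ by~\req{Lem:d-derivative}, while the block operator~\req{DKprime} obeys $\norm{(\partial K_\Lambda)v}_{\L(\X_\Lambda)}\le e\Cbeta\norm{v}_\V$ — this drops out of~\req{kprimenorm} just as the bound~\req{Ruelle-KS} for $K_\Lambda$ drops out of $\norm{k_n}_{\Y_n}\le\cbeta^n$; combining this with~\req{Ruelle-KS} and the $\Lambda$-independent bound $\norm{\bfrho}_{\X_{\R^3}}\le C_*$ from the proof of Lemma~\ref{Lem:Aprimelimit} yields $\norm{\psi}_{\X_{\R^3}}\le C\norm{v}_\V$ with $C$ independent of $v$. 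Since neither $A_\Lambda$ nor $A$ involves $v$, it now suffices to prove the convergence with a bound that depends on $\psi$ only through $\norm{\psi}_{\X_{\R^3}}$.

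Next I would expand both resolvents into Neumann series. By~\req{z} and~\req{Ruelle-KS} one has $q:=z\cbeta e^{2\beta B+1}<1$ and $\norm{zA_\Lambda}_{\L(\X_\Lambda)},\norm{zA}_{\L(\X_{\R^3})}\le q$, hence $(I-zA_\Lambda)^{-1}P_\Lambda\psi=\sum_{k\ge0}z^kA_\Lambda^kP_\Lambda\psi$ and $(I-zA)^{-1}\psi=\sum_{k\ge0}z^kA^k\psi$, and the tails from index $N+1$ are bounded in $\X_\Lambda$ respectively $\X_{\R^3}$ — therefore also in $\X_{\Lambda'}$, since $\Lambda'\subset\Lambda$ — by $q^{N+1}(1-q)^{-1}\norm{\psi}_{\X_{\R^3}}$, uniformly in $\Lambda$ and $v$. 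Given $\eps>0$ I would fix $N$ so that both tails fall below $\eps/3$, which reduces the claim to showing, for each fixed $k$, that $\norm{A_\Lambda^kP_\Lambda\psi-A^k\psi}_{\X_{\Lambda'}}\le\eta_k(\Lambda,\Lambda')\norm{\psi}_{\X_{\R^3}}$ for some $\eta_k(\Lambda,\Lambda')$ that tends to $0$ as $|\Lambda|\to\infty$ for every fixed $\Lambda'$.

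The heart of the argument is a locality property of $K_\Lambda$, and hence of $A_\Lambda$, that is already contained in the proof of Lemma~\ref{Lem:Aprimelimit}: if $\bfphi_\Lambda\in\X_\Lambda$ are uniformly norm-bounded and $\norm{\bfphi_\Lambda-\bfphi}_{\X_{\Lambda^*}}\to0$ for every fixed box $\Lambda^*$, then $\norm{A_\Lambda\bfphi_\Lambda-A\bfphi}_{\X_{\Lambda'}}\to0$ for every fixed box $\Lambda'$. Quantitatively, choosing $\Lambda^*\supset\Lambda'$ with $|R'-R|>d$ for all $R'\in\Lambda'$ and $R\in\R^3\setminus\Lambda^*$, and splitting the integrals in~\req{K-nu-tilde} over $(\Lambda^*)^n$ and its complement as in~\req{Lem:Aprimelimit-tmp}, produces $\norm{A_\Lambda\bfphi_\Lambda-A\bfphi}_{\X_{\Lambda'}}\le C'\bigl(\norm{\bfphi_\Lambda-\bfphi}_{\X_{\Lambda^*}}+\cbetad\sup_\Lambda\norm{\bfphi_\Lambda}_{\X_\Lambda}\bigr)$ for a constant $C'$, where $\cbetad$ refers here to the Mayer function of $u$ and satisfies $\cbetad\to0$ as $d\to\infty$ by regularity of $u$. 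I would then induct on $k$: for $k=0$ the difference $A_\Lambda^0P_\Lambda\psi-A^0\psi$ restricted to $(\Lambda')^m$ vanishes as soon as $\Lambda\supset\Lambda'$, so $\eta_0\equiv0$; for the inductive step the sequences $\bfphi_\Lambda:=A_\Lambda^{k-1}P_\Lambda\psi$ and $\bfphi:=A^{k-1}\psi$ are norm-bounded by $(\cbeta e^{2\beta B+1})^{k-1}\norm{\psi}_{\X_{\R^3}}$ and satisfy $\norm{\bfphi_\Lambda-\bfphi}_{\X_{\Lambda^*}}\le\eta_{k-1}(\Lambda,\Lambda^*)\norm{\psi}_{\X_{\R^3}}$ by the inductive hypothesis with box $\Lambda^*$, so the locality estimate gives $\eta_k(\Lambda,\Lambda')\le C'\bigl(\eta_{k-1}(\Lambda,\Lambda^*)+(\cbeta e^{2\beta B+1})^{k-1}\cbetad\bigr)$; letting $|\Lambda|\to\infty$ and afterwards $d\to\infty$ forces $\eta_k(\Lambda,\Lambda')\to0$. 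Summing over $k=0,\dots,N$ makes the finite part of the Neumann series smaller than $\eps/3$ once $|\Lambda|$ is large, and the three contributions combine to the assertion, uniformly for $\norm{v}_\V\le1$ because every bound involved $\psi$ only through $\norm{\psi}_{\X_{\R^3}}\le C$.

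The step I expect to be the real obstacle is this last one, though it is bookkeeping rather than a new idea: one has to isolate from the proof of Lemma~\ref{Lem:Aprimelimit} the fact that $K_\Lambda$ maps uniformly bounded, locally convergent sequences to locally convergent ones (that proof is written for the concrete sequence $\bfrho_\Lambda$ and, in part, for $K_\Lambda'$, but the estimate for $K_\Lambda$ is embedded in it), and then carry the uniform-in-$v$ control cleanly through the $k$-fold iteration and the Neumann truncation. One should also check that the nonlinear projections $\Pi_m$, the indices $j^*$, and the diagonal functions $d_m$ are genuinely independent of $\Lambda$ — they are, being built from $u$ alone — so that $P_{\Lambda'}A_\Lambda$ and $A_{\Lambda'}P_{\Lambda'}$ agree where the locality estimate uses it; this is what legitimizes the restriction manipulations.
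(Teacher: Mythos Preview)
Your proposal is correct and follows essentially the same route as the paper: both observe that the claim is Ruelle's compact-convergence argument for $(I-zA_\Lambda)^{-1}$ applied with source $\bfalpha=A'\bfrho$ in place of $\bfe$, together with the uniform bound~\req{Astrichrho-bd} on $\norm{A'\bfrho}_{\X_{\R^3}}$ for $\norm{v}_\V\le1$. Your Neumann-truncation plus the inductive locality estimate (nested intermediate boxes, $\cbetad\to0$) is exactly the content of \cite[Theorem~4.2.3]{Ruel69} that the paper invokes by reference; the bookkeeping you flag---that each inductive step consumes one layer of box, so one really needs the chain $\Lambda'=\Lambda_{k_0}\subset\cdots\subset\Lambda_1\subset\Lambda$ as in~\req{nested-boxes}---is precisely what Ruelle's proof (and the paper's treatment of~\req{term-II}) makes explicit.
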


Note that if one replaces $A'\bfrho$ by $\bfe$ in \req{term-I} then the
resulting assertion is that $\bfrho_\Lambda$ converges uniformly to $\bfrho$
in the given box $\Lambda'$ as $|\Lambda|\to\infty$, cf.~\req{KS-prime}.
In fact, one can reuse the corresponding proof of 
Ruelle~\cite[Theorem~4.2.3]{Ruel69} with $\bfalpha=A'\bfrho$ instead of 
$\bfalpha=\bfe$ throughout to verify \req{term-I}. 
The proof of Lemma~\ref{Lem:term-I} is then an easy consequence
because the rate of convergence only depends on the norm of $\bfalpha$, and
\be{Astrichrho-bd}
   \norm{A'\bfrho}_{\X_{\R^3}}
   \,\leq\, \norm{\partial A}_{\L(\V,\L(\X_{\R^3}))}
            \norm{v}_\V \norm{\bfrho}_{\X_{\R^3}}
\ee
is uniformly bounded for $\norm{v}_\V\leq 1$; take note that an upper bound
for \req{Astrichrho-bd} can be chosen in a way to be also an appropriate 
bound for $\norm{A_\Lambda'\bfrho_\Lambda}_{\X_\Lambda}$.

\begin{xproof}{\ref{Thm:B}}
We now turn to a proof of \req{term-II}. To this end we consider a finite
number of nested boxes
\be{nested-boxes}
   \Lambda' = \Lambda_{k_0} \subset \Lambda_{k_0-1} \subset\dots \subset 
   \Lambda_1 \subset \Lambda\,,
\ee
each of them centered at the origin, and $\Lambda_k$ having sides of length
\bdm
   \ell_k \,=\, \ell_0+2(k_0-k)d\,, \qquad k=1,\dots,k_0\,,
\edm
where $\ell_0$ is the length of the sides of the given box $\Lambda'$.
The particular number $k_0$ will be chosen later; see \req{term-II-est} below.
With each box $\Lambda_k$ we associate the Banach space 
$\X_k=\X_{\Lambda_k}$, the corresponding projector $P_k=P_{\Lambda_k}$ of \req{P},
and the operator $A_k=A_{\Lambda_k}\in\L(\X_k)$. As before we refer to
$A=A_{\R^3}$ for the operator corresponding to the full space.

At this stage we quote another auxiliary result, 
cf.~\cite[(2.41) in Chapter~4]{Ruel69}, namely that
\be{final-2}
\begin{aligned}
   \norm{P_{k+1}A_1^k P_1
         \,-\, P_{k+1}A_\Lambda^k}_{\L(\X_{\Lambda},\X_{k+1})}
   &\,\leq\, 2k e^{2\beta B+1} \norm{A_\Lambda}_{\L(\X_{\Lambda})}^{k-1} \cbetad \\
   &\,\leq\, 2k \cbeta^{k-1} e^{k(2\beta B+1)} \cbetad
\end{aligned}
\ee
for $k=0,1,\dots,k_0-1$.
In \cite{Ruel69} this result was proved with $A_\Lambda$ replaced by $A$
(and in $\L(\X_{\R^3},\X_{k+1})$), however, the argument given there
does not need any modification to establish \req{final-2} as it stands.
From \req{final-2} it follows that
\bdmal
   &\norm{A_\Lambda^kP_\Lambda
          (A'\bfrho-A_\Lambda'\bfrho_\Lambda)}_{\X_{k+1}}\\[1ex]
   &\ \,\leq\, \norm{(A_1^k P_1 - A_\Lambda^k)P_\Lambda
                 (A'\bfrho-A_\Lambda'\bfrho_\Lambda)}_{\X_{k+1}}
           \,+\, \norm{A_1^kP_1
                     (A'\bfrho-A_\Lambda'\bfrho_\Lambda)}_{\X_{k+1}}\\[1ex]
   &\ \,\leq\, 2k\cbeta^{k-1}e^{k(2\beta B+1)} \cbetad 
           \norm{A'\bfrho-A_\Lambda'\bfrho_\Lambda}_{\X_\Lambda}
           \,+\, \cbeta^ke^{k(2\beta B+1)} 
                 \norm{A'\bfrho-A_\Lambda'\bfrho_\Lambda}_{\X_1}\,,
\edmal
and hence, the representation
\bdmal
   &(I-zA_\Lambda)^{-1}P_\Lambda(A'\bfrho-A_\Lambda'\bfrho_\Lambda)
    \\[1ex]
   &\qquad
    \,=\, \sum_{k=0}^{k_0-1} z^k A_\Lambda^kP_\Lambda
                             (A'\bfrho-A_\Lambda'\bfrho_\Lambda)
          \,+\, \sum_{k=k_0}^\infty z^k A_\Lambda^kP_\Lambda
                                     (A'\bfrho-A_\Lambda'\bfrho_\Lambda)
\edmal
leads to the upper bound
\bdmal
   &\norm{(I-zA_\Lambda)^{-1}P_\Lambda
          (A'\bfrho-A_\Lambda'\bfrho_\Lambda)}_{\X_{\Lambda'}}
    \\[1ex]
   &\quad
    \,\leq\, \sum_{k=0}^{k_0-1} 
                z^k \norm{A_\Lambda^kP_\Lambda
                       (A'\bfrho-A_\Lambda'\bfrho_\Lambda)}_{\X_{k+1}}
    \,+\, \sum_{k=k_0}^\infty z^k \norm{A_\Lambda}_{\X_\Lambda}^k
                            \norm{A'\bfrho-A_\Lambda'\bfrho_\Lambda}_{\X_\Lambda}
    \\[1ex]
   &\quad
    \,\leq\, \sum_{k=0}^{k_0-1} 
                2k\cbeta^{k-1} (ze^{2\beta B+1})^k\cbetad\, 
                \norm{A'\bfrho-A_\Lambda'\bfrho_\Lambda}_{\X_\Lambda}\\
   &\quad
    \phantom{\,\leq\,}
    \ +\, \sum_{k=0}^{k_0-1} 
             (z\cbeta e^{2\beta B+1})^k
                \norm{A'\bfrho-A_\Lambda'\bfrho_\Lambda}_{\X_1}
    \,+\, \sum_{k=k_0}^\infty 
             (z\cbeta e^{2\beta B+1})^k
             \norm{A'\bfrho-A_\Lambda'\bfrho_\Lambda}_{\X_\Lambda}\,.
\edmal
By virtue of \req{Astrichrho-bd} $\norm{A'\bfrho}_{\X_{\R^3}}$ and 
$\norm{A_\Lambda'\bfrho_\Lambda}_{\X_\Lambda}$ are uniformly bounded, hence there
exists a constant $c>0$ (depending only on $z$ and $\beta$) with
\be{term-II-est}
\begin{aligned}
   &\norm{(I-zA_\Lambda)^{-1}P_\Lambda
          (A'\bfrho-A_\Lambda'\bfrho_\Lambda)}_{\X_{\Lambda'}}
    \\[1ex]
   & \qquad 
     \,\leq\, c\bigl(
              \cbetad \,+\, \norm{A'\bfrho-A_\Lambda'\bfrho_\Lambda}_{\X_1}
                      \,+\, (z\cbeta e^{2\beta B+1})^{k_0}\bigr)\,,
\end{aligned}
\ee
provided that $\norm{v}_\V\leq 1$ and $z$ and $\beta$ satisfy \req{z}.

Now let $\eps>0$ be given. By virtue of \req{z} and \req{cbetad} we can
choose $d$ and $k_0$ so large that the first and third summand on the 
right-hand side of \req{term-II-est} are both smaller than $\eps/(3c)$. 
A corresponding choice of nested boxes $\Lambda_k$ of \req{nested-boxes}
is possible provided that $\Lambda$ is sufficiently big.
In fact, making $|\Lambda|$ even larger 
the second term on the right-hand side of \req{term-II-est} will also 
become less than $\eps/(3c)$ by virtue of Lemma~\ref{Lem:Aprimelimit},
uniformly for $v\in\V$ with $\norm{v}_\V\leq 1$. 
Accordingly, there holds
\bdm
   \norm{(I-zA_\Lambda)^{-1}P_\Lambda
         (A'\bfrho-A_\Lambda'\bfrho_\Lambda)}_{\X_{\Lambda'}}
   \,\leq\,\eps
\edm
for $|\Lambda|$ sufficiently large, which yields \req{term-II}.
Again we emphasize that the size of $|\Lambda|$ to achieve a given bound
$\eps>0$ does not depend on $v$ as long as $\norm{v}_\V\leq 1$.

As mentioned before, a combination of \req{term-II} and \req{term-I},
cf.~Lemma~\ref{Lem:term-I}, implies that
\bdm
   \norm{(\partial\bfrho_\Lambda)v-(\partial\bfrho)v}_{\X_{\Lambda'}}
   \,\to\, 0 \qquad \text{as $|\Lambda|\to \infty$}\,,
\edm
uniformly for all $v\in\V$ with $\norm{v}_\V\leq 1$.
\end{xproof}

\section{Explicit computation of the derivatives of the singlet and
pair distribution functions}
\label{Sec:explicit}
In the sequel we are going to determine more suitable 
and implementable formulae for the derivatives of the first two molecular 
distribution functions in a finite size box $\Lambda\subset\R^3$.

For a given $v\in\V$ with $\norm{v}_{\V}\leq t_0$ we write
\be{VN}
   V_N(\RR_N) \,=\, \sum_{1\leq i<j\leq N} v(|R_i-R_j|)
\ee
in analogy to \req{pairpotential}. Then, to begin with,
a straightforward formal computation
provides the derivative $\partial\,\Xi_\Lambda$ of the grand canonical 
partition function~\req{Xi}:
\bdm
   (\partial\,\Xi_\Lambda)v \,=\, 
   -\beta \sum_{N=2}^\infty \frac{z^N}{N!}
             \int_{\Lambda^N} V_N(\RR_N)e^{-\beta U_N(\RR_N)}\dRR_N\,.
\edm
That $\partial\,\Xi_\Lambda$ is a Fr\'echet derivative in $\L(\V,\R)$ can
readily be checked by following the line of argument 
of the proof of Lemma~\ref{Lem:d-derivative}. 
Since $V_N$ is defined by a pairwise interaction of identical particles
we can rewrite this derivative in the simpler form
(cf., e.g., Ben-Naim~\cite[Sect.~3.1]{BenN06}, or the arguments 
employed below),
\bdm
   (\partial\,\Xi_\Lambda)v \,=\, 
   -\frac{\beta}{2} \int_{\Lambda}\int_\Lambda
       v(|R-R'|) \rho_\Lambda^{(2)}(R,R')\dR\!\dR'\,,
\edm
which is amenable to numerical computations.

The argument of Lemma~\ref{Lem:d-derivative} can also be used to determine
the Fr\'echet derivatives $\partial Z_\Lambda^{(m)}\in\L(\V,L^\infty(\Lambda^m))$
of the numerators
\bdm
   Z_\Lambda^{(m)}(\RR_m) 
   \,=\, \sum_{N=m}^\infty \frac{z^N}{(N-m)!}
         \int_{\Lambda^{N-m}}\!\! 
            e^{-\beta U_N(\RR_N)}
         \dRR_{m,N}
\edm
of the molecular distribution functions $\rho_\Lambda^{(m)}$, namely
\bdm
   \bigl((\partial Z_\Lambda^{(m)})v\bigr)(\RR_m)
   \,=\, - \beta\sum_{N=m}^\infty\frac{z^N}{(N-m)!}
               \int_{\Lambda^{N-m}} V_N(\RR_N)e^{-\beta U_N(\RR_N)}\dRR_{m,N}\,.
\edm
For $m=1$ we can utilize \req{VN} and the fact that individual particles are
indistinguishable to reformulate 
\bdmal
   &\bigl((\partial Z_\Lambda^{(1)})v\bigr)(R_1)
    \,=\, -\beta\sum_{N=2}^\infty \frac{z^N}{(N-1)!}
          \int_{\Lambda^{N-1}}\!\sum_{1\leq i<j\leq N}\!v(|R_i-R_j|)\,e^{-\beta U_N(\RR_N)}
          \dRR_{1,N}\\[1ex]
   &\qquad
    \,=\, -\beta\sum_{N=2}^\infty \frac{z^N}{(N-1)!} 
          \Biggl((N-1)\int_\Lambda v(|R_1-R_2|) 
                \int_{\Lambda^{N-2}}e^{-\beta U_N(\RR_N)}\dRR_{2,N}\dR_2\\
   &\qquad \qquad \qquad \qquad \qquad \quad \ \,
          +\! \sum_{2\leq i<j\leq N}
              \int_{\Lambda^{N-1}} v(|R_i-R_j|)
                                   e^{-\beta U_N(\RR_N)}\dRR_{1,N}
          \Biggr)\\[1ex]
   &\qquad
    \,=\, -\beta\sum_{N=2}^\infty \frac{z^N}{(N-2)!} 
          \int_\Lambda v(|R_1-R_2|) 
          \int_{\Lambda^{N-2}}e^{-\beta U_N(\RR_N)}\dRR_{2,N}\dR_2\\[1ex]
   &\qquad \phantom{\,=\,}\ 
          -\frac{\beta}{2}\sum_{N=3}^\infty \frac{z^N}{(N-3)!}
              \int_\Lambda\int_\Lambda v(|R_2-R_3|)
              \int_{\Lambda^{N-3}} e^{-\beta U_N(\RR_N)}\dRR_{3,N}\dR_3\!\dR_2\\[1ex]
   &\qquad
    \,=\, -\beta \int_\Lambda v(|R_1-R_2|)Z_\Lambda^{(2)}(\RR_2)\dR_2
          \,-\,\frac{\beta}{2}
           \int_\Lambda\int_\Lambda v(|R_2-R_3|)Z_\Lambda^{(3)}(\RR_3)\dR_3\!\dR_2\,.
\edmal
Likewise we obtain 
\bdmal
   &\bigl((\partial Z_\Lambda^{(2)})v\bigr)(R_1,R_2)
    \,=\, -\beta\sum_{N=2}^\infty\frac{z^N}{(N-2)!}
          \int_{\Lambda^{N-2}} V_N(\RR_N)e^{-\beta U_N(\RR_N)}\dRR_{2,N}\\[1ex]
   &\qquad
    \,=\, -\beta\sum_{N=2}^\infty\frac{z^N}{(N-2)!} 
          \Biggl(
             v(|R_1-R_2|)\int_{\Lambda^{N-2}} e^{-\beta U_N(\RR_N)}\dRR_{2,N}
                    \\[1ex]
   &\qquad \qquad 
          \,+\, (N-2)\int_\Lambda v(|R_1-R_3|) 
                \int_{\Lambda^{N-3}}e^{-\beta U_N(\RR_N)}\dRR_{3,N}\dR_3\\[1ex]
   &\qquad \qquad 
          \,+\, (N-2)\int_\Lambda v(|R_2-R_3|) 
                \int_{\Lambda^{N-3}}e^{-\beta U_N(\RR_N)}\dRR_{3,N}\dR_3\\[1ex]
   &\qquad \qquad 
          \,+\,
   \frac{(N-2)(N-3)}{2} \int_\Lambda\int_\Lambda v(|R_3-R_4|)
   \int_{\Lambda^{N-4}} e^{-\beta U_N(\RR_N)}\dRR_{4,N} \dR_4\!\dR_3
          \Biggr)\,,
\edmal
which can be rewritten as
\bdmal
   &\bigl((\partial Z_{\Lambda}^{(2)})v\bigr)(R_1,R_2)
    \,=\, -\beta\, v(|R_1-R_2|) Z_{\Lambda}^{(2)}(\RR_2)\\[1ex]
   &\qquad
    \,-\, \beta\int_\Lambda v(|R_1-R_3|)Z_{\Lambda}^{(3)}(\RR_3)\dR_3
    \,-\, \beta\int_\Lambda v(|R_2-R_3|)Z_{\Lambda}^{(3)}(\RR_3)\dR_3 \\[1ex]
   &\qquad
    \,-\, \frac{\beta}{2}
           \int_\Lambda\int_\Lambda 
              v(|R_3-R_4|) Z_{\Lambda}^{(4)}(\RR_4) \dR_4\!\dR_3\,.
\edmal

After these preparations we can employ the quotient rule to obtain
\be{drhodU1}
\begin{aligned}
   \bigl((\partial \rho_\Lambda^{(1)})v\bigr)(R_1)
   \,=\,& \frac{1}{\Xi_\Lambda}\bigl((\partial Z_\Lambda^{(1)})v\bigr)(R_1)
            \,-\, \rho_{\Lambda}^{(1)}(R_1)\,
                \frac{(\partial\,\Xi_\Lambda)v}{\Xi_\Lambda}\\[1ex]
   \,=\,& -\!\beta 
              \int_\Lambda v(|R_1-R'|)\rho_\Lambda^{(2)}(R_1,R')\dR' \\[1ex]
   &      - \frac{\beta}{2}
            \int_\Lambda\int_\Lambda 
               v(|R_1'-R_2'|) \rho_\Lambda^{(3)}(R_1,R_1',R_2')
            \dR_2'\!\dR_1'\\[1ex]
   &      + \frac{\beta}{2}
            \int_\Lambda\int_\Lambda 
               v(|R_1'-R_2'|) \rho_\Lambda^{(1)}(R_1)\rho_\Lambda^{(2)}(R_1',R_2')
            \dR_2'\!\dR_1'
\end{aligned}
\ee
and
\begin{subequations}
\label{eq:drhodU2}
\begin{align}
\nonumber
   \bigl((\partial \rho_\Lambda^{(2)})v\bigr)&(R_1,R_2)
   \,=\, \frac{1}{\Xi_\Lambda}\bigl((\partial Z_\Lambda^{(2)})v\bigr)(R_1,R_2)
          \,-\, \rho_\Lambda^{(2)}(R_1,R_2)\,
                \frac{(\partial\,\Xi_\Lambda)v}{\Xi_\Lambda} \\[1ex]
\nonumber
   \,=\,& -\!\beta\, v(|R_1-R_2|) \rho_\Lambda^{(2)}(R_1,R_2)\\[1ex]
\label{eq:drhodUa}
        &- \beta
           \int_\Lambda v(|R_1-R'|)\rho_\Lambda^{(3)}(R_1,R_2,R')\dR' \\[1ex]
\label{eq:drhodUb}
        &- \beta
           \int_\Lambda v(|R_2-R'|)\rho_\Lambda^{(3)}(R_1,R_2,R')\dR' \\[1ex]
\label{eq:drhodUc}
        &- \beta
           \int_\Lambda\int_\Lambda 
              \frac{1}{2}\,v(|R_1'-R_2'|) \rho_\Lambda^{(4)}(R_1,R_2,R_1',R_2')
           \dR_1'\!\dR_2'\\[1ex]
\nonumber
        &+ \frac{\beta}{2}
           \int_\Lambda\int_\Lambda 
              v(|R_1'-R_2'|) \rho_\Lambda^{(2)}(R_1,R_2)
                             \rho_\Lambda^{(2)}(R_1',R_2')
           \dR_2'\!\dR_1'\,.
\end{align}
\end{subequations}

\begin{remark}
\label{Rem:Lyubartsev}
\rm
To illustrate \req{drhodU2} imagine the situation of a fixed pair of 
coordinates $R_1,R_2\in\Lambda$ with $R_1\neq R_2$, when $v=v(r)$ is a delta
distribution located in $r'>0$. Then the three integrals in
\req{drhodUa}--\req{drhodUc} provide the expected number of events -- up to
a factor $4\pi r'{}^2$ due to the polar coordinate transformation -- 
of encountering at the same time a pair of particles at $R_1$ and $R_2$ and
a different pair of particles with distance $r'$. The three individual terms
account for events where
\begin{itemize}
\item[\req{drhodUa}:]
  $R_1$ is one of the two members of the second pair of particles,
\item[\req{drhodUb}:]
  $R_2$ is one of the two members of the second pair of particles,
\item[\req{drhodUc}:]
  all four particles involved are different;
  during integration every second pair of particles is counted twice,
  hence the extra factor $1/2$.
\end{itemize}
This agrees with the formula provided in \cite{LyLa95}.
\fin
\end{remark}

Finally we mention that neither of the two representations of
$(\partial\rho_\Lambda^{(1)})v$ and $(\partial\rho_\Lambda^{(2)})v$
has a straightforward extension to the thermodynamical limit.
To extend these formulae to the thermodynamical limit the double integrals
need to be recombined. This will be reconsidered in part II 
of this work~\cite{Hank16b}.

\section{Conclusions}
\label{Sec:finale}
We have shown that for a grand canonical ensemble of idential point-like 
particles in thermodynamical equilibrium the molecular distribution 
functions $\rho_\Lambda^{(m)}$ and their thermodynamical limits $\rho^{(m)}$ 
are differentiable with respect to the underlying pair potential 
and the $L^\infty$ norm of the distribution functions. To do so we have 
considered pair potentials that satisfy Assumption~\ref{Ass:u} which is 
slightly stronger than just being stable and regular; for these potentials 
we can treat the entire regime of small activities $z>0$ for which the 
thermodynamical limit of the molecular distribution functions is known to 
be well-defined. In physical terms these activities correspond to a region
without phase transitions, i.e., the \emph{gas phase} of this molecular fluid, 
cf.~\cite{Ruel69}. 

Assumption~\ref{Ass:u} comes with a natural topology to study perturbations 
$v$ of $u$: On the one hand perturbations must not be too strong to 
violate the repulsive nature of the potential when particles get close,
hence, $|v|$ must be bounded by $u$ near the origin; on the other hand,
perturbations must be sufficiently small for distant particles
to maintain regularity. 
We mention, though, that 
the
choice of $\uo$ allows rather general decay rates of the underlying potential 
$u$ and its perturbations $v$; as a consequence the latter can by far dominate 
the underlying potential near infinity.

We have also proved that the derivative of $\rho_\Lambda^{(m)}$ converges 
compactly to the derivative of $\rho^{(m)}$ as $|\Lambda|\to\infty$.
This justifies the approximate computation of, say, the derivative of the 
radial distribution function (an often-used structural quantity in 
chemical physics) by numerical particle simulations in a finite size box,
as suggested in \cite{LyLa95}. 

In a subsequent paper~\cite{Hank16b} we will reconsider
the pair distribution function $\rho_\Lambda^{(2)}$ and the corresponding radial 
distribution function $g=g(r)$ because they call for the investigation of
differentiability in another function space that reflects the property
that $g(r)\to 1$ as $r\to\infty$ for potentials $u$ that satisfy 
Assumption~\ref{Ass:u}.
The methods that come to use in \cite{Hank16b} employ cluster expansions
and are of completely different nature than the ones that 
have been utilized here.


\end{document}